\documentclass[journal,12pt,onecolumn,draftclsnofoot,]{IEEEtran}


\normalsize


\usepackage{cite}

%
\ifCLASSINFOpdf
\else
\fi

\usepackage{algorithm}
\usepackage[noend]{algorithmic}	
\usepackage{amsthm}
\newtheorem{theorem}{Theorem}
\newtheorem{lem}[theorem]{Lemma}
\newtheorem{prop}{Proposition}

\theoremstyle{definition}

\newtheorem{rem}{Remark}
\usepackage{amsmath,amssymb}
\DeclareMathOperator*{\argmin}{argmin}
\DeclareMathOperator*{\argmax}{argmax}

\usepackage{lipsum}
\usepackage{graphicx}
\ifCLASSOPTIONcompsoc
    \usepackage[caption=false, font=normalsize, labelfont=sf, textfont=sf]{subfig}
\else
\usepackage[caption=false, font=footnotesize]{subfig}
\fi

\usepackage{url}
\usepackage{verbatim}
\usepackage{array}

%

\DeclareMathOperator{\Exp}{E}
\newcommand*{\Break}{\textbf{break}}

%
\usepackage{float}    
\usepackage{graphicx}	
\providecommand{\myceil}[1]{\left \lceil #1 \right \rceil }

\usepackage{xcolor}

%


\hyphenation{op-tical net-works semi-conduc-tor}

\begin{document}
%
\title{Age-Aware Dynamic Frame Slotted ALOHA for Machine-Type Communications}
%
%
%

\author{Masoumeh~Moradian,
        Aresh~Dadlani,~\IEEEmembership{Senior~Member,~IEEE,}
        Ahmad~Khonsari,
        and~Hina~Tabassum,~\IEEEmembership{Senior~Member,~IEEE}
\thanks{M. Moradian is with the School of Computer Engineering, K. N. Toosi University of Technology, Tehran, Iran, and with the School of Computer Science, Institute for Research in Fundamental Sciences, Tehran, Iran (e-mail: mmoradian@kntu.ac.ir).}
\thanks{A. Khonsari is with the School of Electrical and Computer Engineering, University of Tehran, Iran, and also with the School of Computer Science, Institute for Research in Fundamental Sciences, Tehran, Iran (e-mail: a\_khonsari@ut.ac.ir).}
\thanks{A. Dadlani is with the Department of Computing Science, University of Alberta, Edmonton, AB T6G 2E8, Canada, and also with the Department of Electrical and Computer Engineering, Nazarbayev University, Astana 010000, Kazakhstan (e-mail: aresh@ualberta.ca).}
\thanks{H. Tabassum is with the Department of Electrical Engineering and Computer Science, York University, Toronto, ON M3J 1P3, Canada (e-mail: hinat@yorku.ca).}
}

%
%

\markboth{IEEE Transactions on Communications}%
{Submitted paper}
%



\maketitle
\vspace{-0.3em}

\begin{abstract}
\fontdimen2\font=0.54ex
Information aging has gained prominence in characterizing communication protocols for timely remote estimation and control applications. This work proposes an Age of Information (AoI)-aware threshold-based dynamic frame slotted ALOHA (T-DFSA) for contention resolution in random access machine-type communication networks. Unlike conventional DFSA that maximizes the throughput in each frame, the frame length and age-gain threshold in T-DFSA are determined to minimize the normalized average AoI reduction of the network in each frame. At the start of each frame in the proposed protocol, the common Access Point (AP) stores an estimate of the age-gain distribution of a typical node. Depending on the observed status of the slots, age-gains of successful nodes, and maximum available AoI, the AP adjusts its estimation in each frame. The maximum available AoI is exploited to derive the maximum possible age-gain at each frame and thus, to avoid overestimating the age-gain threshold, which may render T-DFSA unstable. Numerical results validate our theoretical analysis and demonstrate the effectiveness of the proposed T-DFSA compared to the existing optimal frame slotted ALOHA, threshold-ALOHA, and age-based thinning protocols in a considerable range of update generation rates.
\end{abstract}

\begin{IEEEkeywords}
Age of information, random access, dynamic frame slotted ALOHA, Internet of Things, stochastic arrivals.
\end{IEEEkeywords}\vspace{-0.3em}

%
\IEEEpeerreviewmaketitle

\section{Introduction}
\label{sec1}
\fontdimen2\font=0.54ex
\IEEEPARstart{B}{y} virtue of the ubiquitous connection offered by emerging wireless technologies and the vast~proliferation of smart devices, the Internet of Things (IoT) and Machine-Type Communications (MTC) have revolutionized the way information is shared and processed \cite{Rita2016}. Making precise control and monitoring decisions in such real-time networked systems depends critically on the timely reception of information at the destination node. The application domains of the IoT devices essentially~determine how long the information aggregated at a common Access Point (AP) may still be regarded as fresh \cite{Elmagid2019, Chiariotti2022}. To quantify the timeliness of status updates in such large-scale systems, the Age of Information (AoI) has been widely adopted as a key destination-centric performance metric \cite{Kaul2012}. Formally, the AoI at any given time $t$ is defined as the random process $\Delta(t) \triangleq t - u(t)$, where~$u(t)$~is the generation time of the most recent status~update received at the monitoring node~\cite{Kumar2023}. Previous studies have shown that maximizing throughput or reducing delay does not necessarily ensure a low AoI~\cite{Yates2021}.

Besides the stringent temporal constraint, the AoI in IoT~networks is also influenced by the underlying channel access scheduling policy for contending nodes that intermittently transmit short~status packets~\cite{Ghavimi2015}. Though Time Division Multiple Access (TDMA)-based access protocols are power-efficient~for individual nodes, they are not suitable for decentralized applications due to their low spectrum efficiency and lack of scalability.~For this reason, random access protocols such as Slotted ALOHA (SA) and Framed Slotted ALOHA (FSA) are deemed more effective in capturing~the nuances of information aging in timely communications with bursty traffic \cite{Atabay2020}. Existing age-aware slot-based protocols leverage threshold-based structures to limit contention to nodes with higher age-gains, where the age-gain of a node in each slot is defined as the difference~between the ages of the most recent updates available at the AP and the node~\cite{Tahir2021,Chen2022,Ahmetoglu2022}. Higher age-gains indicate that the most recent update at the node was generated a long time after its last~successful transmission~\cite{Chen2022}. While threshold-based~SA protocols significantly reduce the Average AoI (AAoI) compared to conventional protocols, they may not be energy-efficient~due to their reliance on contention alone.~In contrast, hybrid protocols such as FSA strike a balance between contention-based and contention-free approaches, thus making them well-suited for settings with low-power nodes~\cite{Ajinkya2015,Yu2021}. By centrally controlling node access to the medium, hybrid protocols mitigate excessive collisions and offer improved energy efficiency. In particular, FSA has long been the \emph{de facto} protocol in Machine-to-Machine (M2M) and RFID communications since it restricts node access to a limited number of slots within frames, thus effectively reducing collision probability~\cite{Ajinkya2015,Yu2021,wang2023age,yue2023age,huang2023distributed,Saha2021, Andrea2021, Huang2022, Li2022}.

Various FSA variants have been subject to recent investigations within the context of AoI. These studies have probed~the performance of FSA in scenarios with frame-synchronous~and frame-asynchronous users~\cite{wang2023age}, multiple source-destination pairs \cite{yue2023age}, and heterogeneous users~\cite{huang2023distributed}. Furthermore, exploration of the potential for age improvement has been conducted with a more recent variant of FSA, known as Irregular FSA (IRSA)~\cite{Saha2021, Andrea2021, Huang2022, Li2022,huang2023distributed}. In IRSA, the frame length is assumed to~be fixed, and each node transmits $d$ replicas of a single update within the frame, with the value of $d$ determined by a common distribution across all nodes. The AP uses Successive Interference Cancellation (SIC) at the end of the frame to detect the~transmitted updates. An issue common to FSA protocols with fixed frame lengths, including IRSA, is that the frame size may not align with the number of active (contending) nodes, which is often unknown at the start of frames, particularly in networks characterized~by stochastic update generation. This mismatch can result in throughput degradation or network instability, as the frame length may be significantly larger or smaller than the number of active nodes. A simplified version of IRSA, referred to as Frameless Slotted ALOHA (FA), which adopts a binomial distribution for $d$~\cite{stefanovic2012frameless}, is studied under an enforcing policy that restricts access to only nodes with non-zero age-gains in each frame~\cite{huang2023age}. The FA with dynamic frame has~also been~scrutinized for age improvement, wherein the AP extends the frame duration until~all updates are recovered using SIC~\cite{munari2023dynamic}.~To address the instability issue in FSA,~the Dynamic FSA (DFSA) has been proposed~\cite{Frits1983, Su2019}. In DFSA, the AP dynamically adjusts the frame length based~on estimations of backlogged nodes~in each frame, aiming to optimize throughput. Consequently, DFSA entails estimating the number of backlogged nodes and broadcasting the frame length at the onset of each frame. In light of the favorable characteristics of DFSA, such as suitability for low-power scenarios and the potential for dynamic decision-making at the AP to maintain data freshness, our work concentrates on refining its estimation and broadcast processes for age improvement. In particular, we incorporate AoI-related information of both successful nodes and the AP in the estimation process, and introduce an age-gain threshold in addition to the frame size in the broadcast process. Additionally, we depart from the assumptions of prior studies by allowing nodes to retransmit updates, a feature particularly valuable in settings where nodes cannot generate updates at will.\vspace{-0.2em}

\subsection{Related Work}
\label{subsec1.A}
\fontdimen2\font=0.54ex
AoI performance of scheduling policies in shared channels has garnered much research attention in recent years. Concerning decentralized access protocols, SA has been analyzed for periodic~status updates \cite{Bae2022}, stochastic arrivals \cite{Kadota2021}, and unreliable channels \cite{Yates2017, Munari2021}. The collision-resolution~SA protocol in \cite{Pan2022} reduces the AAoI by resolving all collisions in each access period.~Integrating~age-aware transmissions into SA protocols can further improve their age efficiency. By exploiting~the optimal age threshold and transmission probability, \cite{Tahir2021} shows that the optimal AAoI under Threshold-ALOHA (TA) grows linearly with the network size, yielding a multiplicative gain of $1.4169$. The~TA protocol is modified in \cite{Ahmetoglu2022} to include a mini-slot at the start of each slot, within which active users contendwith certain probabilities and back off in case of collisions. The Stationary Age-based Thinning (SAT) and the Adaptive Age-based Thinning (AAT) protocols for stochastic arrivals are proposed in \cite{Chen2022}, where each node transmits its latest packet according to SA only if its age-gain exceeds a certain threshold. This threshold is estimated dynamically by monitoring ACK messages or determined statically, leading to similar scaling results as in \cite{Tahir2021}. In particular, in AAT, each node estimates the age-gain distribution at the beginning of each slot using the collision status of~the previous slot. Compared to AAT and SAT, the proposed T-DFSA protocol updates the estimated age-gain distribution by leveraging the successfully received age-gains in the previous frame. Moreover, in T-DFSA, the~AP is responsible for updating estimations in each slot and determining the threshold, thus making it more suitable for low-power nodes.

Among the FSA-based protocols, \cite{Wang2022} investigates the AoI performance of FSA, where each~frame consists of a fixed~number of data slots and one reservation slot. The AAoI in the FSA structure, where users compete using a pool of pilots~in each frame, has been explored and optimized in \cite{Yang2022}. The design and stability analysis of an FSA-based protocol in the presence of new arrivals and an unknown node population are provided in \cite{Yu2021}. In \cite{wang2023age}, the authors analyze the performances of frame-synchronous and frame-asynchronous FSA. The IRSA protocol is a recent adaptation of FSA~that allows repeated update transmissions in a frame and uses SIC to offset the latency cost of framed channel access. The work in \cite{Andrea2021} examines~the AAoI and age-violation likelihood to optimize the IRSA repetition distribution and frame size, while analytical results for the optimal normalized channel traffic and repetition distribution are derived in \cite{Saha2021}.~To optimize the number of transmissions in IRSA, \cite{Li2022} presents a Q-learning solution for energy-harvesting IoT networks. In \cite{Huang2022}, the AAoI and packet loss rate expressions for a graph-based spatially coupled IRSA protocol combined with the pseudo-random access method and coupled frames are obtained. In~\cite{yue2023age}, the AoI of IRSA is analyzed using stochastic geometry in the presence of source-destination pairs. In \cite{huang2023distributed}, users are categorized into heterogeneous groups, and the nodes within each group transmit under IRSA with specific probabilities. FA protocols differ from IRSA in that active nodes transmit in each slot of the frame with specific~probabilities, and the AP utilizes SIC at the end of the frame to recover the updates. The authors of \cite{huang2023age} analyze the performance of FA, where successful nodes in one frame are not allowed to transmit in the next. The FA protocol in \cite{munari2023dynamic} extends the frame length until all updates from active nodes are recovered, and various performance metrics, including AoI, are analyzed. However, they do not account for retransmissions in their scenario.

The throughput of the standard DFSA protocol is optimized by setting the frame length equal to the number of backlogged nodes. Estimating the number of backlogged nodes is crucial, especially~in scenarios with stochastic packet arrivals. Existing efforts primarily focus on exploiting the observed status of slots within a single frame to minimize estimation errors within that frame. Algorithms such as the Vogt algorithm \cite{Vogt2002}, the maximum a posteriori probability rule \cite{Chen2009}, and the Schoute algorithm \cite{Frits1983} have been proposed for this purpose. Although DFSA has been scrutinized for maximizing throughput, its performance in relation to the new AoI requirement has not been investigated. This AoI-incorporated setup is ideally pertinent for remote estimation and control of observed processes in power-limited IoT networks.  To the best of our knowledge, this work presents the first proposal for improving the DFSA protocol that benefits from a threshold-based approach to minimize age, specifically considering updates generated in a non-deterministic manner.\vspace{-0.3em}

\subsection{Contributions}
\label{subsec1.B}
\fontdimen2\font=0.54ex
In this work, we focus on laying the theoretical foundations for the \emph{age-aware} design of a novel \emph{threshold-based} DFSA (T-DFSA) policy for networks in which the status updates~\emph{stochastically} generated by nodes are aggregated at the AP. The main contributions of this paper are as follows:
\begin{itemize}
	\item We propose the ideal T-DFSA, in which the common AP broadcasts the frame length and the age-gain threshold at the start of each frame to identify the nodes eligible to transmit in that frame. The Average AoI Reduction (AAR) per slot is minimized by determining the threshold and frame size of each frame. We establish that the AAR is minimized if only the nodes with the highest age-gains transmit in each frame and the frame size is equal to the number of such nodes, presuming that the age-gains are known at the start of each frame.
	\item We then propose the practical T-DFSA, where the AP, being unaware of the exact age-gains of the nodes, maintains~an estimate of the age-gain probability mass function associated with a typical node at the start of each frame and updates it across frames. As such, the AP implements T-DFSA in four steps within each frame. The frame length and the threshold are specified and broadcast using the current estimation in the first step. Observations of the AP within the frame, such as the slot states and the age-gains of the successful nodes, are used to update estimates in the second step. In the third and fourth steps, the estimates are further enhanced using the known node generation rates and the maximum achieved AoI, respectively.
	\item Though the frame lengths in T-DFSA are generally small, we demonstrate that it outperforms the optimal FSA (i.e., FSA with optimal fixed frame length) in Average Age-Gain (AAG), even if adjustments to the minimum frame length are based on the AP broadcast frequency~requirements. Moreover, operating at its optimal point, T-DFSA outperforms existing SA-based protocols including TA,~AAT, and SAT in a considerable range of generation rates. Our~numerical results also show that T-DFSA remains stable almost surely if the initial conditions of the nodes are set to be sufficiently diverse.
\end{itemize}\vspace{-0.6em}

\subsection{Organization}
\label{subsec1.C}
\fontdimen2\font=0.54ex
The rest of the paper is organized as follows. Section~\ref{sec2}~describes the system model. The ideal and practical models of T-DFSA, where the AP is, respectively, aware of and unaware of the age-gains of the source nodes, are introduced in Section~\ref{sec3}. The mechanisms of the practical T-DFSA are further detailed in Section~\ref{sec4}, which involve updating the estimations for the following frame and using the estimates from each frame to~calculate the threshold and frame length. Section~\ref{sec5} analyzes the complexity of stable T-DFSA. Numerical results~are presented~in Section~\ref{sec6} and conclusions are drawn in Section~\ref{sec7}.\vspace{-0.4em}

\section{System Model and Notations}
\label{sec2}
\fontdimen2\font=0.54ex
We consider the network in \figurename~\ref{fig0} where $N$ time-synchronous source nodes attempt to send their status~updates to a common AP via an ideal shared wireless channel using the T-DFSA access protocol. The network operates in slotted time, where each frame consists of multiple slots. At the onset of each frame $t$, the AP broadcasts the frame length $w_t$, which denotes the number of slots in the frame, and a threshold $\Gamma_t$. Subsequently, backlogged nodes with an age-gain equal to or greater than $\Gamma_t$ become active. These nodes choose a slot uniformly at~random within~the current frame and transmit their updates with probability one. We assume that if multiple nodes transmit in the same slot, the AP is unable to decode any transmission successfully and thus, a \emph{collision} occurs. On the other hand, if AP receives no signal, it recognizes an \emph{empty} slot with no transmission. Finally, a \emph{singleton} slot, representing a single node transmission translates to successful transmission due to an ideal channel\footnote{We do not consider SIC here since the nodes only transmit once within the frame.~Transmitting multiple replicas within each frame would necessitate specifying distributions for the number of replicas at various frame sizes, which is complicated and has not yet been reported for use in DFSA.}. At the start of frame $t$, the AP broadcasts the slot statuses from the previous frame $t-1$, along with $w_t$ and $\Gamma_t$, enabling successful nodes to discard their sent updates. We assume that the time required for broadcasting these information is negligible compared to frame size $w_t$. All updates are considered equally important and are assumed to be generated at the start of a slot with probability $\lambda$. Each node retains only the most recent update and retransmits it after activation until it is either successfully sent or replaced by a fresher update. Table~\ref{tab_notation} summarizes the main notations.
\begin{figure}[!t]
	\centering
	\includegraphics[width=0.5\columnwidth]{./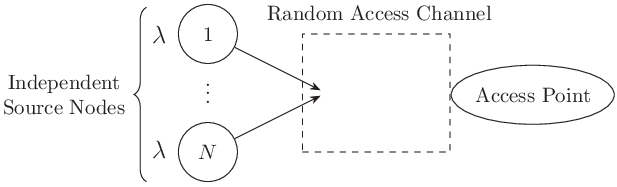}
	\vspace{-0.6em}
	\caption{System model with $N$ source nodes generating updates with~probability $\lambda$ and transmitting them to a common AP over a shared medium.}
	\label{fig0}
\end{figure}
\begin{table}[!t]
\renewcommand{\arraystretch}{1.2}
\caption{Main Notations and Definitions}
\label{tab_notation}
\centering
\begin{tabular*}{\linewidth}{@{\extracolsep{\fill}}l|l}
\hline
\textbf{Notation} & \textbf{Description} \\
\hline\hline
$N$ & Number of independent source nodes. \\
$w_t$ & Length of frame $t$. \\
$\Gamma_t$ & Age-gain threshold in frame $t$. \\
$x_t^i$ & AoI of node $i$ at the onset of frame $t$. \\
$y_t^i$ & AoI of node $i$ at the AP at the onset of frame $t$. \\
$g_t^i$ & Age-gain of node $i$ at the onset of frame $t$. \\
$r_t^i (R_t^i)$ & (Expected) AoI reduction of node $i$ in frame $t$. \\
$\bar{R}_t$ & Average AoI reduction in frame $t$. \\
$P_t^s$ & Successful transmission probability of an active node in\vspace{-0.2em}\\& frame $t$. \\
$n_t^a$ & Number of nodes with age-gain of $a$ at the onset of frame~$t$. \\
$n_{t,s}^a$ & Number of successful nodes with age-gain of $a$ in frame $t$. \\
$\hat{f}_t^a (\hat{f}_{t^+}^a)$ & Estimated probability of a node with age-gain of $a$ at the\vspace{-0.2em}\\& onset (end) of frame $t$. \\
$\hat{n}_t^a (=\! N\hat{f}_t^a)\!\!$ & Estimated average number of nodes with age-gain of $a$ at the\vspace{-0.2em}\\& onset of frame $t$. \\
$\hat{m}_t^a$ & Posterior estimated mean number of nodes with age-gain of\vspace{-0.2em}\\& $a \geq \Gamma_t$ at the onset of frame $t$. \\
$\mathcal{G}_t (\hat{\mathcal{G}}_t)$ & Set (estimated set) of distinct age-gains of all nodes at the\vspace{-0.2em}\\& onset of frame $t$. \\
$\mathcal{A}_t (\hat{\mathcal{A}}_t)$ & Set (posterior estimated set) of distinct age-gains of active\vspace{-0.2em}\\& nodes at the onset of frame $t$. \\
$\mathcal{S}_t$ & Set of distinct age-gains of successful nodes in frame $t$. \\
\hline
\end{tabular*}
\end{table}


To index the slots within a generic frame $t$, we define slot $k$ as the time interval $[k, k+1)$, and $k_t$ as the first slot in that frame. Therefore, the time slots within frame $t$ are represented by the set $\mathcal{K}_t = \{k_t, k_t + 1, \ldots, k_t + w_t - 1\}$. Accordingly, the AoI of node $i$ at the start of frame $t$, denoted by $x_t^i$, evolves over time as follows:\vspace{-0.2em}
\begin{align}
    x_{t+1}^i =
    \begin{cases}
        x_t^i + w_t, & \text{if $u^i(k) = 0, \forall k \in \mathcal{K}_t$}\\[2pt]
        k_{t+1}-k, & \text{if $\exists k \in \mathcal{K}_t\!: u^i(k) = 1$,}\\
        &\quad\text{$u^i(j)=0$ for $j \in \mathcal{K}_t, j > k$},
    \end{cases}
    \label{eq1}
    \vspace{-0.2em}
\end{align}
where $u^i(k) = 1$ if node $i$ generates an update in time slot $k$ and zero otherwise. Assuming that the AP decodes the packets at the end of the frames, the AoI of the update packet received at the AP from node $i$ can be expressed as:\vspace{-0.2em}
\begin{equation}
  y_{t+1}^i =
  \begin{cases}
    x_t^i + w_t, & \text{ if $d_t^i = 1$} \\[2pt]
    y_t^i + w_t, & \text{ if $d_t^i = 0$}, 
  \end{cases}
  \label{eq2}
  \vspace{-0.2em}
\end{equation}
where $d_t^i$ takes the value of $1$ if node $i$ transmits successfully in frame $t$ and $0$ otherwise. Realizations of $x_t^i$ and $y_t^i$ along with the collision and singleton slots for node $i$ are exemplified in \figurename~\ref{fig1}. Based~on the definitions in \eqref{eq1} and \eqref{eq2}, the age-gain $g_t^i$ and age reduction $r_t^i$ of node $i$ in frame $t$ can be written as:\vspace{-0.3em}
\begin{equation}
   \begin{cases}
    g_t^i = y_t^i - x_t^i, \\[2pt]
    r_t^i = y_t^i - y_{t+1}^i. 
  \end{cases}
  \label{eq3}
  \vspace{-0.2em}
\end{equation}

Since the backlogged nodes in T-DFSA are determined by non-zero age-gains, it is noteworthy~that $x_t^i$ in \eqref{eq1} is defined regardless of whether node $i$ is backlogged or not, and has no bearing on our approach. Importantly, the nodes that satisfy the condition $g_t^i \geq \Gamma_t$ are the active nodes in frame $t$.
\begin{figure}[!t]
	\centering
	\includegraphics[width=0.5\columnwidth]{./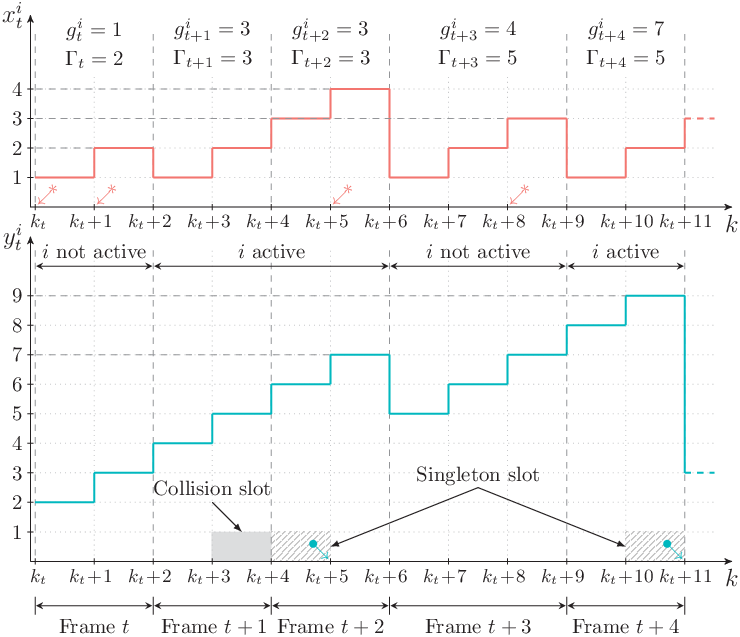}
	\vspace{-0.3em}
	\caption{An example of $x_t^i$ and $y_t^i$ evolving over time under the T-DFSA~protocol. The red arrows (with asterisk tails) represent instances when update packets are generated at node $i$, whereas the blue arrows (with bullet tails) represent instances when they are received at the AP.}
	\label{fig1}
	\vspace{-0.2em}
\end{figure}

It is important to note that in T-DFSA, the selection of active nodes, i.e., the nodes allowed to~transmit in frame $t$, is contingent upon the threshold signaled by the AP.~Furthermore, in T-DFSA, the values of $w_t$ and $\Gamma_t$ are obtained such that the following AAR within frame $t$ is minimized:
\begin{equation}
	\bar{R}_t = \frac{\sum_{i=1}^N \Exp[r_t^i]}{N w_t},
	\label{eq4}
\end{equation}
where $\Exp[\cdot]$ denotes the expectation operator. The rationale behind the division of the total expected age reduction of the network, expressed as $\sum_{i=1}^N \Exp[r_t^i]$, by the frame duration $w_t$ in \eqref{eq4} is to achieve greater age reductions in smaller frames. In fact, $\bar{R}_t$ represents the AAR per slot. 
Additionally, the throughput~in~each frame, denoted by $U_t$, is defined as the average number~of~transmitted updates per slot and is given as:
\begin{equation}
	{U}_t = \frac{\sum_{i:g^i_t \geq \Gamma_t } \Exp[d_t^i]}{ w_t}= \frac{\sum_{i:g^i_t \geq \Gamma_t } \Pr\{d^i_t=1\}}{ w_t},
	\label{eq:thr}
\end{equation}
where $\Exp[d_t^i]$ $(= \Pr\{d^i_t \!=\! 1\})$ is the average number of successfully transmitted updates by node $i$ in frame $t$.

In the section that follows, we first introduce the ideal~T-DFSA in which age-gains are assumed to be known at the~start of the frames. The practical T-DFSA is then proposed, where~the~distribution of $g_t^i$ is estimated by the AP.\vspace{-0.4em}

\section{Proposed T-DFSA Protocol}
\label{sec3}
\fontdimen2\font=0.54ex
In this section, we first gauge the behavior of the hypothetical model in which the age-gain of~each node is known to the AP at the onset of the frames, before delving into the AoI analysis of its practical counterpart.\vspace{-0.4em}

\subsection{Ideal T-DFSA}
\label{sec3.1}
\fontdimen2\font=0.54ex
To facilitate our analysis, we introduce $R_t^i = \Exp[r_t^i | g_t^i]$ to denote the expected age reduction of node $i$ in frame $t$ under ideal T-DFSA. Our derivation of $R_t^i$ begins by considering the scenario~where node $i$ is active in frame $t$ (i.e., $g_t^i \geq \Gamma_t$). Assuming successful transmission by node $i$ in frame $t$, the age reduction of node $i$, denoted by $r_t^i$, can be written based on \eqref{eq2} and \eqref{eq3} as follows:\vspace{-0.1em}
\begin{equation}
	r_t^i = y_t^i - x_t^i - w_t = g_t^i - w_t.
	\label{eq5}
    \vspace{-0.1em}
\end{equation}
On the other hand, if node $i$ is active but fails to transmit~successfully in frame $t$, the value of $y_t^i$ increases by $w_t$, resulting in $y_{t+1}^i = y_{t}^i + w_t$. Using \eqref{eq3}, we then have:
\begin{equation}
	r_t^i = y_{t}^i - y_{t+1}^i = y_{t}^i -y_{t}^i-w_t=-w_t.
	\label{eq5n}
\end{equation}

Let us denote $P_t^s$ as the successful transmission probability of an active node in frame $t$, given by $P_t^s \!=\! \Pr\{d_t^i \!=\!1\}$.~Note that this probability is identical for all active nodes, as they exhibit symmetry when transmitting their packets within the frame. From \eqref{eq5} and \eqref{eq5n}, we can express the expected age reduction~of an active node $i$ in frame $t$ as follows:
\begin{equation}
    R_t^i = P_t^s (g_t^i - w_t) + (1 - P_t^s)(-w_t)=P_t^s g_t^i - w_t.
\end{equation}

If node $i$ remains inactive throughout the frame, then its age increases by $w_t$ with~probability one, leading to $y_{t+1}^i = y_{t}^i + w_t$. Consequently, from \eqref{eq3}, the age reduction $r_t^i$ can be~expressed~as $r_t^i = -w_t$. As a result of this, we have $R_t^i = r_t^i = -w_t$. By consolidating all these considerations, the expression for $R_t^i$ is:
\begin{equation}
  R_t^i =
  \begin{cases}
    P_t^s g_t^i - w_t, & \text{ if $g_t^i \geq \Gamma_t$} \\[2pt]
    -w_t, & \text{ if otherwise}. \\
  \end{cases}
  \label{eq6}
\end{equation}

We now define $n_t^a$ as the number of nodes with an age-gain of $a$ at the start of frame $t$. Then,~the total number of active nodes at the beginning of frame $t$, denoted as $n_t$, can be expressed as $n_t \triangleq \sum_{a = \Gamma_t}^{M_t} n_t^a$, where $M_t = \max_i g_t^i$ is the maximum available age-gain at the beginning of frame $t$. Using the definition of $n_t$, the expression for $P_t^s$ can be derived as:\vspace{-0.2em}
\begin{equation}
  P_t^s = \left(1 - \frac{1}{w_t}\right)^{n_t - 1}.
  \label{eq7}
\end{equation}
In fact, 
$P_t^s$ is the~probability that none of the other $n_t-1$ active nodes choose the same slot as node $i$ and can be expressed as $(1 - 1/{w_t})^{n_t - 1}$.

Given that the age-gains are known in ideal T-DFSA, we can establish that $\Exp[r_t^i] = \Exp[r_t^i|g^i_t]$ and thus, the definition of AAR in \eqref{eq4} can be expressed as $\sum_{i=1}^N \Exp[r_t^i|g^i_t]/(N w_t)$. Taking into account~that $\Exp[r_t^i|g^i_t] = R^i_t$ and using \eqref{eq6}, the AAR in frame $t$ under ideal T-DFSA can be obtained as:
\allowdisplaybreaks
\begin{align}
	\bar{R}_t 
        &= \frac{1}{N w_t\!} \sum_{i=1}^N \!R_t^i 
	            = \frac{1}{Nw_t\!}\!\left(\sum_{i: g_t^i \geq \Gamma_t}\!\!\! (P^s_t g_t^i-w_t) +\! \sum_{i: g_t^i < \Gamma_t}\!\!\! (- w_t)\!\right)  \nonumber \\
        &= \frac{1}{Nw_t\!}\!\left(\!P^s_t\!\sum_{i: g_t^i \geq \Gamma_t}\!\!\! g_t^i -\! \sum_{i: g_t^i \geq \Gamma_t}\!\!\! w_t -\! \sum_{i: g_t^i < \Gamma_t}\!\!\! w_t\right) \nonumber \\
        &= \frac{1}{Nw_t\!}\left(\!P^s_t\!\sum_{i: g_t^i \geq \Gamma_t}\!\!\! g_t^i - \sum_{i=1}^N w_t\right) =\frac{P_t^s}{N w_t\!} \sum_{i: g_t^i \geq \Gamma_t}\!\!\! g_t^i - 1 \nonumber \\
        &= \frac{1}{N w_t\!}\left(1 - \frac{1}{w_t}\right)^{\sum_{a = \Gamma_t}^{M_t} n_t^a - 1} \sum_{a = \Gamma_t}^{M_t}\! a\, n_t^a - 1.
    \label{eq8}
\end{align}
The final expression in \eqref{eq8} is derived using \eqref{eq7} along with the equalities $n_t = \sum_{a = \Gamma_t}^{M_t} n_t^a$ and $\sum_{i:g_t^i \geq \Gamma_t} g_t^i = \sum_{a=\Gamma_t}^{M_t} a\, n_t^a$. Given the values of $g_t^i$ for all $i$, the following lemma indicates the~optimal $\Gamma_t$ and $w_t$ values that maximize AAR in frame $t$.
\begin{lem}
	\label{lemma1}
	In ideal T-DFSA, setting $\Gamma_t = M_t$ and $w_t = n_t^M$ maximizes $\bar{R}_t$.
\end{lem}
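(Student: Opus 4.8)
The plan is to optimize the AAR expression in \eqref{eq8} in two stages: first over the frame length $w_t$ for an arbitrary fixed threshold $\Gamma_t$, and then over the threshold itself. Throughout, I would abbreviate $n_t = \sum_{a=\Gamma_t}^{M_t} n_t^a$ for the number of active nodes and $G_t = \sum_{a=\Gamma_t}^{M_t} a\,n_t^a$ for their aggregate age-gain, so that \eqref{eq8} reads $\bar{R}_t = \frac{G_t}{N}\, \frac{1}{w_t}(1 - 1/w_t)^{n_t-1} - 1$. Since $G_t/N>0$ and the additive $-1$ do not depend on $w_t$, maximizing $\bar{R}_t$ over $w_t$ is equivalent to maximizing $f(w_t) = \frac{1}{w_t}(1 - 1/w_t)^{n_t - 1}$, which is precisely the standard DFSA throughput-shaping factor.

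First I would handle the frame-length optimum. Substituting $p = 1/w_t \in (0,1]$ turns $f$ into $p(1-p)^{n_t-1}$, whose derivative in $p$ equals $(1-p)^{n_t-2}(1 - n_t p)$; this is positive for $p < 1/n_t$ and negative for $p > 1/n_t$, so the unique maximizer is $p = 1/n_t$, i.e. $w_t = n_t$. Because $p(w_t)=1/w_t$ is monotone, $f$ is unimodal in $w_t$ with peak at the integer $w_t = n_t$, so this is also the valid integer optimum. Substituting $w_t = n_t$ gives $\bar{R}_t = \frac{G_t}{N n_t}(1 - 1/n_t)^{n_t-1} - 1 = \frac{\bar{g}_t}{N}\, q(n_t) - 1$, where $\bar{g}_t = G_t/n_t$ is the mean age-gain of the active nodes and $q(n) \triangleq (1 - 1/n)^{n-1}$.

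Next I would optimize over $\Gamma_t$ by applying two monotonicity facts to the reduced objective $\frac{\bar{g}_t}{N}\, q(n_t)$. The first is immediate: since every active age-gain is at most $M_t$, we have $\bar{g}_t \le M_t$, with equality exactly when only the maximal-gain nodes are active, i.e. $\Gamma_t = M_t$. The second is that $q(n)$ is strictly decreasing in $n$, which I would establish by showing $\frac{d}{dx}\ln q(x) = \ln(1 - 1/x) + 1/x \le 0$ via the elementary bound $\ln(1-u) \le -u$. Lowering the threshold only enlarges the active set, so $n_t \ge n_t^M$ with equality iff $\Gamma_t = M_t$, whence $q(n_t) \le q(n_t^M)$. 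Multiplying the two nonnegative inequalities yields $\frac{\bar{g}_t}{N}\, q(n_t) \le \frac{M_t}{N}\, q(n_t^M)$, so $\bar{R}_t$ is maximized at $\Gamma_t = M_t$, at which point the first-stage frame length becomes $w_t = n_t = n_t^M$, as claimed.

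The main obstacle I anticipate lies in the second stage rather than the first: the single-variable frame-length optimization is the routine FSA calculation, whereas the threshold step requires recognizing that the reduced objective factors cleanly as (mean active age-gain) $\times\ q(\text{number of active nodes})$, and that \emph{both} factors improve as $\Gamma_t$ rises toward $M_t$ even though raising the threshold simultaneously shrinks the active set and raises the average gain. I would also guard the edge cases: $w_t = n_t^M$ is a valid integer frame length because $n_t^M \ge 1$, and an empty active set yields the strictly worse value $\bar{R}_t = -1$, so the optimum indeed occurs at a nonempty active set.
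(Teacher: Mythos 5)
Your proof is correct and follows the same two-stage plan as the paper's Appendix~\ref{appA}: first maximize over $w_t$ at fixed $\Gamma_t$ to get $w_t=n_t$ (your $p=1/w_t$ substitution repackages the derivative computation in \eqref{eq40}), then factor the reduced objective into the throughput term $(1-1/n_t)^{n_t-1}$ times the normalized mean active age-gain --- exactly the terms \textsl{x} and \textsl{y} of \eqref{eq41}. The one place you genuinely depart from the paper is the monotonicity fact driving the threshold step, and your version is the correct one: you prove $q(n)=(1-1/n)^{n-1}$ is decreasing in $n$ (via $\ln(1-u)\le -u$), so raising $\Gamma_t$, which shrinks the active set, \emph{increases} the throughput factor, and both factors are then maximized simultaneously at $\Gamma_t=M_t$ by multiplying two nonnegative inequalities. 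The paper's appendix asserts the opposite --- that $(1-1/n_t)^{n_t-1}$ increases with $n_t$, hence that term \textsl{x} and even $\bar{R}_t^*(\Gamma_t)$ itself are decreasing functions of $\Gamma_t$ --- which, read literally, would place the optimum at the smallest threshold and contradict the lemma; these are sign slips, and your product-of-inequalities argument (compare any $\Gamma_t$ directly against $\Gamma_t=M_t$ using $\bar{g}_t\le M_t$ and $q(n_t)\le q(n_t^{M_t})$) silently repairs them while reaching the same conclusion as the paper's convex-combination argument for term \textsl{y}. One minor caveat in your write-up: ``equality iff $\Gamma_t=M_t$'' should read ``iff the active set contains only maximal-gain nodes,'' since equality also holds for $\Gamma_t<M_t$ when $n_t^a=0$ for all $\Gamma_t\le a<M_t$; but as the lemma only claims that $\Gamma_t=M_t$ attains the maximum, this changes nothing.
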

\begin{proof}
	See Appendix~\ref{appA}.
\end{proof}

From the above lemma, it becomes evident that in ideal T-DFSA, the AP allows nodes with the highest age-gain to transmit within a frame, while setting the frame length equal to the number of nodes with the highest age-gain. Considering the definition of $P^s_t$ and using \eqref{eq7}, the throughput in \eqref{eq:thr} can thus be written as:
\begin{equation}
    {U}_t = \frac{n_t P^s_t}{ w_t}=\frac{n_t}{w_t}\left(1-\frac{1}{w_t}\right)^{n_t-1}.
	\label{eq:thr1}
\end{equation}
It is worth noting that in \eqref{eq:thr1}, $U_t$ is the throughput of a DFSA protocol during frame $t$, given that there are $n_t$ active nodes permitted to transmit in this frame under the random access strategy outlined in Section~\ref{sec2}.~In conventional DFSA, where all nodes are identical, any backlogged~node is expected to be active.~Thus, $n_t$ indicates the number of backlogged nodes.~However, in T-DFSA,~active nodes are defined as those with the highest age-gain.~In both cases, the throughput adheres~to \eqref{eq:thr1} and is maximized when $w_t = n_t$ (see Appendix~\ref{appA} for the~proof). Therefore, as stated in Lemma~\ref{lemma1}, the AAR-optimal T-DFSA, which sets $w_t = n^{M_t}_t$, maximizes the throughput of T-DFSA as well.~In fact,~the AAR-optimal policy maximizes the number of successfully transmitted packets, but it exclusively allows nodes with the~maximum age-gain to transmit.\vspace{-0.4em}

In the case when $\lambda = 1$, the age-gains are known at the AP and can be expressed as $g_t^i = y_t^i - 1$, since $x_t^i = 1$ for all $i$. This implies that the nodes always have a fresh update at the beginning of each frame. By applying Lemma~\ref{lemma1} in this scenario, we can conclude that only the nodes satisfying $y_t^i = \max_i y_t^i$ are allowed to transmit, and the frame length is determined by the number of these nodes. It is not viable for the AP to acquire the age-gains of all source nodes when $\lambda < 1$, especially~in large-scale IoT and M2M communications. Therefore, in the following sub-section, we present a~practical T-DFSA protocol where the AP estimates the average values of $n_t^a$ using observations from the preceding frame before deciding on $w_t$ and $\Gamma_t$ values.\vspace{-0.5em}

\subsection{Practical T-DFSA}
\label{sec3.2}
\fontdimen2\font=0.54ex
The AP in practical T-DFSA stores an estimate of the age-gain distribution of a typical source~node at the start of frame $t$. Accordingly, the estimated probability that a node has an age-gain of $a$ at the onset of frame $t$ is denoted by $\hat{f}_t^a$. The AP also keeps an estimate of the set of distinct age-gains,~given by $\hat{\mathcal{G}}_t$, at the start of frame $t$ where for all $a \in \hat{\mathcal{G}}_t$, we have $\hat{f}_t^a > 0$ and $\sum_{a \in \hat{\mathcal{G}}_t} \hat{f}_t^a = 1$. The estimated average number of nodes with age-gain $a$ at the start of frame $t$, i.e. $\hat{\Exp}[n_t^a]$, is represented by $\hat{n}_t^a = N \hat{f}_t^a$ as the nodes are symmetric. Since it is impossible to determine the precise value of~the largest age-gain in practical T-DFSA, the AP leverages a threshold $\Gamma_t$ based on estimation $\hat{f}_t^a$, and then allows all nodes with age-gains equal to or higher than $\Gamma_t$ to transmit in frame $t$. The AP then updates $\hat{f}_t^a$ for the subsequent frame based on the observations made during the current frame and its local information. A succinct explanation of the key steps of the proposed practical T-DFSA~(detailed in Section~\ref{sec4}) is given below:
\begin{itemize}
	\item \textbf{Step 1:} The AP decides on $\Gamma_t$ and the frame length $w_t$ based on the estimates $\hat{f}_t^a$, and broadcasts them to the nodes.
	\item \textbf{Step 2:} At the end of frame $t$, the AP improves $\hat{f}_t^a$ using observations made within the frame, which encompass the slot states and age gains of successful nodes. Subsequently, the AP computes the estimated probability of a node with an age-gain of $a$ at the end of frame~$t$, denoted as $\hat{f}_{t^+}^a$.
	\item  \textbf{Step 3:} The AP exploits the update generation rate $\lambda$ and $\hat{f}_{t^+}^a$ to obtain  $\hat{f}_{t+1}^a$.
	\item \textbf{Step 4:} Finally, the AP truncates $\hat{f}_{t+1}^a$ using its knowledge of the maximum age-gain that can be~deduced from the AoI of all nodes, i.e. $y_t^i, \forall i$. 
\end{itemize}\vspace{-0.4em}
\begin{algorithm}[!t]
	\caption{Practical T-DFSA (Implemented in Frame $t$)}
	{\fontsize{9.5}{10.0}\selectfont
	\begin{algorithmic}[1]
		\STATE \textit{// Implemented at the onset of frame $t$}\vskip 1pt
		\renewcommand{\algorithmicrequire}{\textbf{Step 1:~~\emph{Input:}} $\bigl\{\hat{\mathcal{G}}_t, \{\hat{f}_t^a\}_{a \in \hat{\mathcal{G}}_t}\bigr\}$;~~ \textbf{\emph{Output:}} $\{\Gamma_t, w_t\}$}
		\REQUIRE
		\STATE \vskip 1pt Set $\hat{n}_t^a = N \hat{f}_t^a$, $\forall a \in \hat{\mathcal{G}}_t$.\vskip 1pt
		\STATE Derive $\Gamma_t$ and $w_t$ from \eqref{eq10} and \eqref{eq12}, respectively. \vskip 6pt

		\STATE \textit{// Implemented at the end of frame $t$.}\vskip 1pt
		\renewcommand{\algorithmicrequire}{\textbf{Step 2:~~\emph{Input:}} $\bigl\{\hat{\mathcal{G}}_t, \{\hat{n}_t^a\}_{a \in \hat{\mathcal{G}}_t\!}, \mathcal{S}_t, \{n_{t,s}^a\}_{a \in \mathcal{S}_t\!}, (n_S,n_E,n_C)\!\bigr\}$;}
		\REQUIRE \textbf{\emph{Output:}} $\{\hat{f}_{t^+\!}^a\}$
		\vskip 1pt\FOR {$l$ from $(n_S + 2n_C)$ to $N$}
			\STATE $z_0 = 0$\vskip 1pt 
			\IF {$n_S > 0$} \vskip 1pt 
				\STATE Derive $\hat{m}_t^a$, $\forall a \in \mathcal{S}_t$, from Proposition~\ref{prop1}. \vskip 1pt 
				\STATE $z_1 = \dfrac{\binom{w_t}{n_S}\binom{w_t - n_S}{n_E}}{\binom{l}{n_S}} \prod_{a \in \mathcal{S}_t} \binom{\hat{m}_t^a}{n_{t,s}^a} q_S^{n_S} q_E^{n_E} q_C^{n_C}$ \vskip 1pt 
			\ELSE \vskip 1pt 
				\STATE $z_1 = \binom{w_t}{n_E} q_E^{n_E} q_C^{n_C}$ \vskip 1pt 
			\ENDIF \vskip 1pt 
			\IF {$z_1 > z_0$} \vskip 1pt 
				\STATE $z_0 = z_1$ \vskip 1pt 
			\ELSE \vskip 1pt 
				\STATE $\hat{l} = l$ \vskip 1pt 
				\IF {$n_S = 0$} \vskip 1pt 
					\STATE Set $\hat{m}_t^{\Gamma_t} = \hat{l}$ and $\hat{m}_t^a = 0$, $\forall a \in \mathcal{S}_t$. \vskip 1pt 
				\ELSIF {$n_S > 0$} \vskip 1pt 
					\STATE Use $\hat{l}$ in Proposition~\ref{prop1} to derive $\hat{m}_t^a$, $\forall a \in \mathcal{S}_t$. \vskip 1pt 
				\ENDIF \vskip 1pt 
				\STATE Derive $\hat{m}_{t^+}^a$ and $\hat{f}_{t^+}^a$ from \eqref{eq30} and \eqref{eq31}, respectively. \vskip 1pt 
				\STATE \Break \vskip 6pt 
			\ENDIF
		\ENDFOR
		\STATE \textit{// Implemented at the end of frame $t$.}\vskip 1pt
		\renewcommand{\algorithmicrequire}{\textbf{Step 3:~~\emph{Input:}} $\{\hat{f}_{t^+}^a, \max_i x_0^i, \max_i y_t^i\}$;~~\textbf{\emph{Output:}} $\{\hat{f}_{t+1}^a\}$}
		\REQUIRE 
		\STATE \vskip 1pt Derive $\hat{f}_{t+1}^a$ from \eqref{eq34}. \vskip 6pt

		\STATE \textit{// Implemented at the end of frame $t$.}\vskip 1pt
		\renewcommand{\algorithmicrequire}{\textbf{Step 4:~~\emph{Input:}} $\{\hat{f}_{t+1}^a\}$;~~\textbf{\emph{Output:}} $\{\hat{f}_{t+1}^a\}$}
		\REQUIRE 
		\STATE \vskip 1pt Truncate $\hat{f}_{t+1}^a$ using \eqref{eq35}.
	\end{algorithmic}
	}
	\label{alg1}
\end{algorithm}

\section{Characterization of Practical T-DFSA}
\label{sec4}
\fontdimen2\font=0.54ex
This section is dedicated to a comprehensive explanation of the four main steps in practical T-DFSA as given in Algorithm~\ref{alg1}.\vspace{-0.5em}

\subsection{Step 1: Derivation of Threshold and Frame Length}
\label{sec4.1}
\fontdimen2\font=0.54ex
Unlike in ideal T-DFSA, the AP only has access to the~estimates $\hat{n}_t^a$, $\forall a$, in practical T-DFSA.~
Hence, instead of determining the exact highest age-gain, a threshold $\Gamma_t \in \hat{\mathcal{G}}_t$ is determined, indicating that any node with an age-gain equal to or higher than this threshold can transmit in frame~$t$. In~light~of Lemma~\ref{lemma1}, the threshold is chosen to be as high as possible to ensure that only nodes with the highest age-gains contend in frame $t$. Specifically, the threshold is selected to be the highest value that guarantees the estimated average number of active nodes, denoted by $\hat{n}_t$, is at least equal to~one. Owing to this fact, the threshold $\Gamma_t \in \hat{\mathcal{G}}_t$ is chosen such that:
\begin{equation}
	\hat{n}_t = \sum_{a \in \hat{\mathcal{G}}_t: a \geq \Gamma_t}\!\!\! \hat{n}_t^a \geq 1.
	\label{eq9}
\end{equation}
Selecting $\hat{n}_t$ to be less than one would result in empty slots, which is not desirable since the minimum frame length is equal to one. As such, the threshold $\Gamma_t$ can be expressed as:\vspace{-0.2em}
\begin{equation}
	\Gamma_t = \max \left\lbrace \Gamma' \in \hat{\mathcal{G}}_t \mid \sum_{a \in \hat{\mathcal{G}}_t: a \geq \Gamma'} \hat{n}_t^a \geq 1 \right\rbrace.
	\label{eq10}
    \vspace{-0.2em}
\end{equation}
Letting $\hat{n}_t \geq 1$, however, may not be optimal given that decisions are made based on estimates~$\hat{n}_t^a$, rather than the exact values. As a result, we determine $\Gamma_t$ such that $\hat{n}_t$ is lower bounded by a minimal threshold $w_{min}$ as follows:\vspace{-0.3em}
\begin{equation}
	\Gamma_t = \max \left\lbrace \Gamma' \in \hat{\mathcal{G}}_t \mid \sum_{a \in \hat{\mathcal{G}}_t: a \geq \Gamma'} \hat{n}_t^a \geq w_{min} \right\rbrace.
	\label{eq11}
    \vspace{-0.3em}
\end{equation}
Consequently, the frame length is chosen to be:\vspace{-0.2em}
\begin{equation}
	w_t = \myceil{\sum_{a \in \mathcal{A}_t} \hat{n}_t^a},
	\label{eq12}
    \vspace{-0.2em}
\end{equation}
where $\mathcal{A}_t$ is the set of distinct age-gains of active nodes~at~the start of frame $t$. The optimal value~of $w_{min}$ is found~through exhaustive search, but as will be demonstrated in Section~\ref{sec6},~at higher generation rates and lower values of $N$, $w_{min} \!=\! 1$ results~in the best T-DFSA performance because the variance~of the age-gains is reduced and the estimates of $\hat{n}_t^a$ become more accurate. Additionally, the power consumption of the AP (or equivalently, the frequency of its broadcasts) may be restricted.~To guarantee that $w_t$ is always greater than the minimum broadcast interval in this situation, $w_{min}$ can be set to be equal to the inverse of the maximum broadcast frequency.\vspace{-0.7em}

\subsection{Step 2: Derivation of $\hat{f}_{t^+}^a$ using Observations}
\label{sec4.2}
\fontdimen2\font=0.54ex
At the end of frame $t$, the AP updates the list of successfully received age-gains and the status of~the slots within the frame.~It then uses this information to update the available age-gain~distribution $\hat{f}_t^a$ and compute the age-gain distribution at the end of frame $t$, denoted as $\hat{f}_{t^+}^a$, where $t^+$ represents~the end of frame~$t$. The number of singleton, empty, and collision slots, denoted by $n_S$, $n_E$, and $n_C$,~respectively, as well as the age-gains of the successful nodes, are exploited in particular by the AP. As such, the number of successful nodes in frame $t$ with age-gain of $a$ is denoted by $n_{t,s}^a$.

For the sake of better representation, we adopt the notion of \emph{multisets}, which differs from a set in that it allows for~multiple occurrences of an element. A multiset can be denoted as $\{(e, m(e)): e \in A\}$, where $A$ is the set of distinct elements~in the multiset, and $m(e)$ denotes the number of occurrences~of element $e$. Following this definition, $\mathcal{L}_t$ and $\mathcal{L}'_t$, which represent  the multisets of age-gains of active and successful nodes in frame $t$, respectively, are given as:\vspace{-0.2em}
\begin{equation}
   \begin{cases}
    \mathcal{L}_t = \{(a, n_t^a): a \in \mathcal{A}_t\}, \\[2pt]
    \mathcal{L}'_t = \{(a, n_{t,s}^a): a \in \mathcal{S}_t\},
  \end{cases}
  \label{eq13}
  \vspace{-0.2em}
\end{equation}
where $\mathcal{S}_t$ denotes the set of distinct age-gains of successful nodes in frame $t$. Note that all nodes that contend in frame $t$ have age-gains greater than or equal to $\Gamma_t$ and thus, every element in sets $\mathcal{A}_t$ and $\mathcal{S}_t$ is guaranteed to be greater than or equal to $\Gamma_t$.

Let us first suppose that $\mathcal{L}'_t$ is not empty, i.e. there exists at least one successful slot in frame~$t$ (or $n_S > 0$). The~AP then calculates the estimate of $\mathcal{L}_t$, denoted by $\hat{\mathcal{L}}_t$, based on the observations $I=(n_S,n_E,n_C)$ and $\mathcal{L}'_t$ as follows:\vspace{-0.1em}
\begin{equation}
	\hat{\mathcal{L}}_t = \bigl\{(a, \hat{m}_t^a): a \in \hat{\mathcal{A}}_t\bigr\},
	\label{eq14}
	\vspace{-0.2em}
\end{equation}
where $\hat{\mathcal{A}}_t$ and $\hat{m}_t^a$ are estimations of $\mathcal{A}_t$ and average $n_t^a$ after observing frame $t$, respectively.~Specifically, we can conveniently write:\vspace{-0.2em}
\begin{equation}
	\hat{m}_t^a = \hat{\Exp}\left[n_t^a | I, \mathcal{L}'_t\right].
	\label{eq15}
	\vspace{-0.2em}
\end{equation}
We now employ the maximum likelihood (ML) estimation to derive $\hat{\mathcal{L}}_t$ as follows:\vspace{-0.2em}
\begin{equation}
	\hat{\mathcal{L}}_t = \argmax_{\mathcal{L}_t} \Pr(I, \mathcal{L}'_t | \mathcal{L}_t),
	\label{eq16}
	\vspace{-0.2em}
\end{equation}
where the maximization in \eqref{eq16} can be further expressed as:\vspace{-0.2em}
\begin{equation}
	\max_{l} \max_{\mathcal{L}_t: |\mathcal{L}_t|=l} \Pr(I | \mathcal{L}_t) \Pr(\mathcal{L}'_t | \mathcal{L}_t,I).
	\label{eq17}
 \vspace{-0.2em}
\end{equation}

Next, we focus on deriving the first conditional probability in \eqref{eq17}, i.e. $\Pr(I | \mathcal{L}_t)$. The probabilities that a single time slot in frame $t$ is a singleton, an empty, and a collision slot are denoted by $q_S$, $q_E$, and $q_C$, respectively, and can be readily computed as:
\begin{equation}
   \begin{cases}
    q_S = \frac{|\mathcal{L}_t|}{w_t}\left(1 - \frac{1}{w_t}\right)^{|\mathcal{L}_t|-1}, \vspace{0.2em} \\
    q_E = \left(1 - \frac{1}{w_t}\right)^{|\mathcal{L}_t|}, \vspace{-0.2em} \\
    q_C = 1 - q_S - q_E,
  \end{cases}
  \label{eq18}
\end{equation}
where $|\mathcal{L}_t| = \sum_{a \geq \Gamma_t} n_t^a$. It should be noted that $q_S$ in \eqref{eq18} is the likelihood that one of the $|\mathcal{L}_t|$ nodes transmits in a chosen slot, with probability $1/w_t$, and the others do not, with probability $(1 - 1/w_t)^{|\mathcal{L}_t|-1}$, whereas $q_E$ is the probability that no node transmits. Using \eqref{eq18}, $\Pr(I | \mathcal{L}_t)$ can now be stated as follows:\vspace{-0.2em}
\begin{equation}
	\Pr(I | \mathcal{L}_t) = \Pr(I |\, |\mathcal{L}_t|) = \binom{w_t}{n_S}\binom{w_t-n_S}{n_E} q_S^{n_S} q_E^{n_E} q_C^{n_C}.
	\label{eq19}
\end{equation}
Interestingly, $\Pr(I | \mathcal{L}_t)$ is solely dependent on the number of active nodes and not their age-gains. Thereby, the maximization in \eqref{eq17} can be re-written as:\vspace{-0.2em}
\begin{equation}
	\max_{l} \Pr(I |\, |\mathcal{L}_t| = l)\,\,\max_{\mathcal{L}_t: |\mathcal{L}_t|=l} \Pr(\mathcal{L}'_t| \mathcal{L}_t, I).
	\label{eq20}
    \vspace{-0.2em}
\end{equation}

Let $\hat{\mathcal{L}}_t(l)$ be the optimal estimation of the multiset $\mathcal{L}_t$, given $\mathcal{L}_t$, $I$, and that the number of active nodes, $|\mathcal{L}_t|$, is equal to $l$. That is to say,\vspace{-0.2em}
\begin{equation}
	\hat{\mathcal{L}}_t(l) = \argmax_{\mathcal{L}_t: |\mathcal{L}_t|=l} \Pr(\mathcal{L}'_t | \mathcal{L}_t, I).
	\label{eq21}
	\vspace{-0.2em}
\end{equation}
Resulting from \eqref{eq20} and \eqref{eq21}, the optimal estimation of the number of active nodes, denoted by $\hat{l}$, can be obtained as:
\begin{equation}
	\hat{l} = \argmax_{l} \Pr(I |\, |\mathcal{L}_t| = l) \Pr(\mathcal{L}'_t | \hat{\mathcal{L}}_t(l), I).
	\label{eq22}
\end{equation}
Moreover, $\hat{\mathcal{L}}_t = \hat{\mathcal{L}}_t(\hat{l})$. With a slight abuse of notation, we use $\hat{\mathcal{L}}_t(l) = \{(a, \hat{m}_t^a): a \in \hat{\mathcal{A}}_t | \sum_a \hat{m}_t^a = l\}$ to derive $\hat{\mathcal{L}}_t(l)$ in Proposition~\ref{prop1}.
\begin{algorithm}[!t]
	\caption{\,\,Derivation of $r_a$ (introduced in Proposition~\ref{prop1})}
	{\fontsize{9.5}{10.0}\selectfont
	\begin{algorithmic}[1]
		\vskip 1pt\renewcommand{\algorithmicrequire}{\textbf{\emph{Input:}} $\{l, n_S, \{n_{t,s}^a\}_{a \in \mathcal{S}_t}\}$;\,\, \textbf{\emph{Output:}} $\{r_a, \forall a \in \mathcal{S}_t\}$}
 		\REQUIRE 
 		\STATE \vskip 2pt Set $(\hat{m}_t^a, r_a) = \left(\lfloor \frac{l}{n_S}\rfloor n_{t,s}^a, 0\right), \forall a \in \mathcal{S}_t$.\vskip 1pt
 		\STATE Set $r = l - \lfloor \frac{l}{n_S}\rfloor n_S$.\vskip 1pt
 		\vskip 1pt\WHILE {$r > 0$}
 			\STATE $\hat{m}_t^J = \hat{m}_t^J + 1$  \qquad\quad $\triangleright$ where $J = \argmin_{a\in \mathcal{S}_t} \dfrac{\hat{m}_t^a + 1}{n_{t,s}^a}$
 			\STATE $r_J = r_J + 1$\vskip 1pt
 			\STATE $r = r - 1$
 		\ENDWHILE
 \end{algorithmic}
 }
 \label{alg2}
\end{algorithm}%
\begin{prop}
	\label{prop1}
	$\hat{\mathcal{L}}_t(l)$ is calculated by setting $\hat{\mathcal{A}}_t = \mathcal{S}_t$ and $\hat{m}_t^a = \lfloor l/n_S \rfloor n_{t,s}^a + r_a$, where $0 \leq r_a \leq n_{t,s}^a$ is derived~from Algorithm~\ref{alg2}. Moreover, the probability $\Pr(\mathcal{L}'_t | \hat{\mathcal{L}}_t(l), I)$ in \eqref{eq22} is computed~as:\vspace{-0.2em}
	\begin{equation}
	\Pr(\mathcal{L}'_t | \hat{\mathcal{L}}_t(l), I) = \frac{\prod_{a \in \mathcal{S}_t} \binom{\hat{m}_t^a}{n_{t,s}^a}}{\binom{l}{n_S}}.
	\label{eq23}
	\vspace{-0.2em}
\end{equation}
\end{prop}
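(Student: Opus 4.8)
The plan is to fix the number of active nodes at $|\mathcal{L}_t|=l$ and first obtain a closed form for the conditional likelihood $\Pr(\mathcal{L}'_t \mid \mathcal{L}_t, I)$ appearing in \eqref{eq21}, and then maximize it over all admissible multisets $\mathcal{L}_t$ of cardinality $l$. For the likelihood, I would exploit the symmetry of the protocol: since every active node picks its slot uniformly at random, independently of its age-gain, conditioning on $I=(n_S,n_E,n_C)$ makes the set of the $n_S$ successful nodes a uniformly random $n_S$-subset of the $l$ active nodes (an exchangeability argument: each slot assignment consistent with $I$ is equiprobable, and by relabeling nodes the count of assignments whose singleton set is a given $S$ does not depend on $S$). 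Writing $\mathcal{L}_t = \{(a, m^a): a \in \mathcal{A}_t\}$ with $\sum_a m^a = l$, the number of such subsets whose age-gain multiset equals $\mathcal{L}'_t$ is $\prod_{a \in \mathcal{S}_t}\binom{m^a}{n_{t,s}^a}$ (choosing which nodes of each age-gain succeed, with a factor $\binom{m^a}{0}=1$ for any age-gain outside $\mathcal{S}_t$), while the total is $\binom{l}{n_S}$. This yields $\Pr(\mathcal{L}'_t \mid \mathcal{L}_t, I) = \prod_{a \in \mathcal{S}_t}\binom{m^a}{n_{t,s}^a}/\binom{l}{n_S}$ whenever $\mathcal{S}_t \subseteq \mathcal{A}_t$ and $m^a \ge n_{t,s}^a$; substituting the optimizer gives \eqref{eq23}, so this step simultaneously proves the likelihood formula.

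Since $\binom{l}{n_S}$ does not depend on $\mathcal{L}_t$, maximizing the likelihood reduces to maximizing the integer-constrained product $\prod_{a \in \mathcal{S}_t}\binom{m^a}{n_{t,s}^a}$ subject to $\sum_a m^a = l$. The next step is to argue $\hat{\mathcal{A}}_t = \mathcal{S}_t$: because $\binom{m}{k}$ is nondecreasing in $m$, any mass $m^a>0$ on an age-gain $a \notin \mathcal{S}_t$ contributes only a factor $1$ while consuming part of the budget $l$, so relocating it onto some $a' \in \mathcal{S}_t$ cannot decrease the product; hence an optimizer is supported exactly on $\mathcal{S}_t$, with each $m^a \ge n_{t,s}^a \ge 1$.

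It then remains to show that Algorithm~\ref{alg2} returns this maximizer. The key structural fact is the log-concavity of $m \mapsto \binom{m}{n_{t,s}^a}$, i.e. the marginal gain $\binom{m+1}{n_{t,s}^a}/\binom{m}{n_{t,s}^a} = (m+1)/(m+1-n_{t,s}^a)$ is decreasing in $m$. For separable log-concave objectives under a single sum constraint, the incremental greedy allocation — repeatedly assigning the next unit to the age-gain of largest marginal multiplicative gain — is globally optimal. I would verify that this greedy rule coincides with $J=\argmin_{a\in\mathcal{S}_t}(\hat{m}_t^a+1)/n_{t,s}^a$: setting $x=(\hat m_t^a+1)/n_{t,s}^a$, the gain equals $x/(x-1)$, decreasing in $x$, so maximizing the gain is equivalent to minimizing $x$. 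Finally, I would show the bulk initialization $\hat{m}_t^a = \lfloor l/n_S\rfloor\, n_{t,s}^a$ is exactly the greedy prefix consisting of all units with gain at least $1+1/(\lfloor l/n_S\rfloor-1)$: at this allocation the last unit placed on each age-gain has gain $\lfloor l/n_S\rfloor/(\lfloor l/n_S\rfloor-1)$, identical across $a$, while every remaining unit is strictly cheaper, so the leftover $r=l-\lfloor l/n_S\rfloor n_S$ units are correctly placed by the while-loop, producing $\hat m_t^a = \lfloor l/n_S\rfloor n_{t,s}^a + r_a$ with $0\le r_a\le n_{t,s}^a$.

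The main obstacle I anticipate is this last step: establishing rigorously that the closed-form initialization agrees with a greedy prefix, handling the ties at the threshold gain $1+1/(\lfloor l/n_S\rfloor-1)$, and covering the boundary case $\lfloor l/n_S\rfloor=1$ (where the initialization is the minimal feasible allocation $m^a=n_{t,s}^a$ and all $l-n_S$ units are placed by the loop). An exchange argument — showing that in any optimal allocation the gain of the last unit on each age-gain dominates the gain of the next unit on every other — supplies the local-to-global optimality certificate that closes this gap.
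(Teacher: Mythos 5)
Your proposal is correct, and its first two steps coincide with the paper's: the exchangeability argument giving $\Pr(\mathcal{L}'_t\mid\mathcal{L}_t,I)=\prod_{a\in\mathcal{S}_t}\binom{m^a}{n_{t,s}^a}/\binom{l}{n_S}$ is exactly \eqref{eq44}, and your budget argument forcing the optimizer to be supported on $\mathcal{S}_t$ is the first half of Lemma~\ref{lemma2}. Where you genuinely diverge is in certifying that Algorithm~\ref{alg2} maximizes $\prod_a\binom{m^a}{n_{t,s}^a}$ subject to $\sum_a m^a=l$. The paper proceeds by necessary conditions: a single-unit flipping argument yields the inequality \eqref{eq24} (Lemma~\ref{lemma2}), a case analysis on \eqref{eq24} pins down the structure $\hat{m}_t^a=\lfloor l/n_S\rfloor n_{t,s}^a+r_a$ with $k'=\lfloor l/n_S\rfloor$ (Lemma~\ref{lemma3}), and an induction shows the algorithm preserves \eqref{eq24} at every iteration. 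You instead observe that $m\mapsto\log\binom{m}{k}$ is discretely concave (the ratio $(m+1)/(m+1-k)$ decreases in $m$), invoke global optimality of greedy marginal allocation for separable concave objectives under a sum constraint, identify the rule $J=\argmin_{a}(\hat{m}_t^a+1)/n_{t,s}^a$ with the max-marginal-gain rule via the decreasing map $x\mapsto x/(x-1)$, and verify that the initialization $\lfloor l/n_S\rfloor n_{t,s}^a$ is precisely the greedy prefix of all units with gain at least $\lfloor l/n_S\rfloor/(\lfloor l/n_S\rfloor-1)$, so the while-loop completes a greedy run. This buys you something the paper leaves implicit: \eqref{eq24} is only a \emph{necessary} condition, and the paper never explicitly argues that a point satisfying it is globally optimal, whereas greedy optimality (itself provable by the exchange argument you sketch at the end) is a genuine local-to-global certificate; your route also recovers $k'=\lfloor l/n_S\rfloor$ for free from the prefix structure rather than through Lemma~\ref{lemma3}'s case analysis. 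What the paper's route buys is elementarity and self-containment, since it needs no appeal to a general marginal-analysis theorem. Your two residual obligations, namely tie handling at the threshold gain and the boundary case $\lfloor l/n_S\rfloor=1$ where the initialization is the minimal feasible allocation, are real but routine, and you flag both, so I see no gap.
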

\begin{proof}
	Knowing $I$ in \eqref{eq21} implies that $|\mathcal{L}'_t| = n_S$. Furthermore, given that $|\mathcal{L}_t| = l$ and $|\mathcal{L}'_t| = n_S$, the probability of observing $\mathcal{L}'_t$ depends on which specific sets of $n_S$ nodes out of $l$ active nodes transmit successfully in frame $t$. In light of the fact that all active nodes behave symmetrically for contention within a frame, the probability that a particular set of $n_S$ nodes would succeed is equal to $1/\binom{l}{n_S}$ and is therefore, independent of $n_E$ and $n_C$. Consequently, the maximization in \eqref{eq21} can be expressed as:\vspace{-0.2em}
	\begin{equation}
		\hat{\mathcal{L}}_t(l) = \argmax_{\mathcal{L}_t} \Pr(\mathcal{L}'_t \mid \mathcal{L}_t, |\mathcal{L}_t| = l, |\mathcal{L}'_t| = n_S),
		\label{eq43}
        \vspace{-0.2em}
	\end{equation}
	where the probability in \eqref{eq43} can be obtained as follows:\vspace{-0.2em}
	\begin{equation}
		\\Pr(\mathcal{L}'_t \mid \mathcal{L}_t, |\mathcal{L}_t| = l, |\mathcal{L}'_t| = n_S) = \frac{\prod_{a \in \mathcal{S}_t} \binom{n_t^a}{n_{t,s}^a}}{\binom{l}{n_S}}.
		\label{eq44}
	\end{equation}
	The denominator in \eqref{eq44} is the total number of (not necessarily distinct) multisets of size $n_S$ observable in frame $t$. The numerator in \eqref{eq44} represents the number of ways the multiset $\mathcal{L}'_t$ can be observed, i.e., $\forall a \in \mathcal{S}_t$, the combination of $n_{t,s}^a$ nodes with age-gain $a$ out of $n_t^a$ active nodes with the same age-gain. Replacing $\mathcal{L}_t$ in \eqref{eq44} with $\hat{\mathcal{L}}_t(l)$ results in \eqref{eq23}. We derive $\hat{m}_t^a$ in the rest of the proof. As such, we first derive a necessary condition for $\hat{m}_t^a$ in Lemma~\ref{lemma2} using \eqref{eq23}. Lemma~\ref{lemma3} then proceeds with the derivation of the exact values of $\hat{m}_t^a$.%
	\begin{lem}
		\label{lemma2}
		For $\hat{\mathcal{L}}_t(l)$, we have $\hat{\mathcal{A}}_t = \mathcal{S}_t$. Moreover, $\forall a,b \in \hat{\mathcal{A}}_t$, the following condition holds:
		\begin{equation}
			\frac{\hat{m}_t^b}{n_{t,s}^b} - \frac{\hat{m}_t^a}{n_{t,s}^a} \leq \frac{1}{n_{t,s}^a}.
			\label{eq24}
		\end{equation}
	\end{lem}
	\begin{proof}
		See Appendix~\ref{appB}.
	\end{proof}
	\begin{lem}
		\label{lemma3}
		For any $a \in \hat{\mathcal{A}}_t$, we have the following, where $0 \leq r_a \leq n_{t,s}^a$, $k' = \lfloor l/n_S\rfloor$, and $k', r_a \in \mathbb{Z}^+$:
		\begin{equation}
			\frac{\hat{m}_t^a}{n_{t,s}^a} = k' + \frac{r_a}{n_{t,s}^a}.
			\label{eq25}
		\end{equation}
	\end{lem}
	\begin{proof}
		See Appendix~\ref{appC}.
	\end{proof}
	
In essence, Lemma~\ref{lemma3} postulates that $\hat{m}_t^a / n_{t,s}^a$ is either $k'$ or $k'+1$ for all values of $a$. The ultimate step is to determine the $r_a$ values in Lemma~\ref{lemma3}. To this end, we present Algorithm~\ref{alg2} which sequentially increases the values of $\hat{m}_t^a$ while ensuring that  \eqref{eq24} holds in each iteration, thus preserving the viability of the final $\hat{m}_t^a$ values. In this algorithm, we apply Lemma~\ref{lemma3} to initialize $\hat{m}_t^a$ with $\lfloor l / n_S\rfloor n_{t,s}^a$, which satisfies \eqref{eq24}. Using induction, we then assume that \eqref{eq24} holds in the $i$-th iteration, $\forall a,b \in \hat{\mathcal{A}}_t$. Then, in iteration $i+1$, we set:
\begin{equation}
  \hat{m}_t^a(i+1) =
  \begin{cases}
    \hat{m}_t^a(i) + 1, & \text{ if $a = J = \argmin_{a \in \hat{\mathcal{A}}_t\!} \frac{\hat{m}_t^a(i) + 1}{n_{t,s}^a}$} \\[2pt]
    \hat{m}_t^a(i), & \text{ if $a \in \hat{\mathcal{A}}_t$ and $a \neq J$}.
  \end{cases}
  \label{eq26}
\end{equation}%
Since, $\forall a,b \neq J$, $\hat{m}_t^a(i + 1)$ and $\hat{m}_t^b(i + 1)$ retain the same values as in the previous iteration, the corresponding inequality \eqref{eq24} still holds in iteration $i + 1$. We now need to show that it also holds for $J$ and $\forall a \neq J$. More precisely, we should prove that $\hat{m}_t^a(i+1) / n_{t,s}^a \leq (1 + \hat{m}_t^J(i+1)) / n_{t,s}^J$ and $\hat{m}_t^J(i+1) / n_{t,s}^J \leq (1 + \hat{m}_t^a(i+1)) / n_{t,s}^a$. Using the formulation in \eqref{eq26} allows us to further simplify the above two inequalities as $\hat{m}_t^a(i) / n_{t,s}^a \leq (2 + \hat{m}_t^J(i)) / n_{t,s}^J$ and $(\hat{m}_t^J(i) + 1) / n_{t,s}^J \leq (1 + \hat{m}_t^a(i)) / n_{t,s}^a$, respectively. The first inequality holds directly from the induction assumption, whereas the second inequality is asserted by the definition of $J$. This completes the proof.
\end{proof}


By plugging \eqref{eq19} and \eqref{eq23} in \eqref{eq22}, we obtain the following, where $q_S$, $q_E$, and $q_C$ are derived from \eqref{eq18} by setting $|\mathcal{L}_t| = l$:
\begin{equation}
	\hat{l} = \argmax_l \frac{\binom{w_t}{n_S} \binom{w_t - n_S}{n_E}}{\binom{l}{n_S}} \prod_{a \in \mathcal{S}_t} \binom{\hat{m}_t^a}{n_{t,s}^a} q_S^{n_S} q_E^{n_E} q_C^{n_C}.
	\label{eq27}
\end{equation}
Besides, $\hat{m}_t^a$ is a function of $l$ and can be readily obtained from Proposition~\ref{prop1}. Once $\hat{l}$ is computed using \eqref{eq27}, substituting it in Proposition~\ref{prop1} eventually yields \eqref{eq14}.

In our derivation of $\hat{\mathcal{L}}_t$ so far, $\mathcal{L}'_t$ was assumed to be non-empty. For the case when $\mathcal{L}'_t = \emptyset$, the optimization in \eqref{eq20} reduces to:\vspace{-0.2em}
\begin{equation}
	\hat{\mathcal{L}}_t = \argmax_{\mathcal{L}_t} \Pr(I \mid \mathcal{L}_t),
	\label{eq28}
    \vspace{-0.2em}
\end{equation}
where $\Pr(I|\mathcal{L}_t)$ is derived from \eqref{eq19} and is only dependent on $|\mathcal{L}_t|$. Thus, we get:\vspace{-0.2em}
\begin{equation}
	\hat{l} = \argmax_{l} \binom{w_t}{n_E} q_E^{n_E} q_C^{n_C},
	\label{eq29}
    \vspace{-0.2em}
\end{equation}
where $q_E$ and $q_C$ are deduced from \eqref{eq18} by setting $n_S = 0$ and $|\mathcal{L}_t| = l$. As can be observed, we are only able to estimate the number of active nodes when $n_S = 0$ and not their age-gains. We therefore, suppose that all $\hat{l}$ active users have an age-gain of $\Gamma_t$ in this circumstance.

Thus far, we have updated the estimation $\hat{n}_t^a$ for $a \geq \Gamma_t$ using the observations in frame $t$. Our ultimate goal is to calculate $\hat{f}_{t^+}^a$, or the probability that a generic node has an age-gain of $a$ at the end of the frame $t$. To achieve this, we first deduce $\hat{m}_{t^+}^a, \forall a$, using the earlier estimations $\hat{n}_t^a$ for $a < \Gamma_t$, the derived estimations $\hat{m}_t^a$ for $a \geq \Gamma_t$, and the data on $n_{t,s}^a$ as follows:\vspace{-0.2em}
\begin{equation}
  \hat{m}_{t^+}^a =
  \begin{cases}
    \hat{n}_t^0 + n_S, & \text{ if $a = 0$} \\[0.5pt]
    \hat{n}_t^a,       & \text{ if $a \in \hat{\mathcal{G}}_t, 0 < a < \Gamma_t$} \\[0.5pt]
     \hat{m}_t^a - n_{t,s}^a, & \text{ if $a \in \mathcal{S}_t$ } \\[0.5pt]
    0, &  \text{ if $a \in \hat{\mathcal{G}}_t$,  $a\geq \Gamma_t$, $a \not\in \mathcal{S}_t$}\,.
  \end{cases}
  \label{eq30}
  \vspace{-0.2em}
\end{equation}
In the first case of \eqref{eq30}, $n_S$ number of successful nodes with zero age-gain at the end of frame $t$ are added to the initial estimation of $\hat{n}_t^0$. In the second case, $\hat{n}_t^a$ for $a < \Gamma_t$ remains unaffected since the corresponding nodes are inactive in frame $t$. For $a \in \mathcal{S}_t$ in the third case of \eqref{eq30}, $n_{t,s}^a$ is deducted~from the updated estimation of $n_t^a$, i.e. $\hat{m}_t^a$. In the final case, if $a \geq \Gamma_t$ but is not among the observed age-gains, we set $\hat{m}^a_{t^+}=0$. This case is considered because, according to Proposition~\ref{prop1}, the predicted active age-gains in frame $t$ match the observed ones, i.e. $\mathcal{A}_t=\mathcal{S}_t$. Therefore, if $a \in \hat{\mathcal{G}}_t$ and $a \geq \Gamma_t$, but it is not amongst the observed age-gains, then the estimation $\hat{m}^a_t$, which is the updated value~of $\hat{n}^a_t$ after observations, is set to zero. As a result, the corresponding estimated value at the end of~frame $t$, $\hat{m}^a_{t^+}$, is also set to zero, as stated in the last case of \eqref{eq30}. It is worth noting that in the third and~fourth cases of \eqref{eq30}, which correspond to $a \geq \Gamma_t$ (either $a \in \mathcal{S}_t$ or not), the values of $\hat{m}_{t^+}^a$ are solely derived based on observations and not on the previous estimations $\hat{n}^a_t$. Conversely, in the case where $a < \Gamma_t$, the values $\hat{m}_{t^+}^a$ are set to be equal to the previous estimations $\hat{n}^a_t$. Accordingly,~although~the $\hat{n}_{t}^a$ values satisfy the equality $\sum_a \hat{n}_{t}^a \!=\! \sum_a N\hat{f}_{t}^a \!=\! N$, the constraint $\sum_a \hat{m}_{t^+}^a \!=\! N$ does not necessarily hold.~In this respect, we normalize the $\hat{m}_{t^+}^a$ values by dividing each value by their sum. This normalization allows us to derive the probability distribution $\hat{f}_{t^+}^a$ as follows:\vspace{-0.2em}
\begin{equation}
	\hat{f}_{t^+}^a = \frac{\hat{m}_{t^+}^a}{\sum_{a \in \hat{\mathcal{G}}_{t^+}} \hat{m}_{t^+}^a},
	\label{eq31}
    \vspace{-0.2em}
\end{equation}
where $\hat{G}_{t^+}$ denotes the set of existing age-gains at the end of frame $t$, formulated as:\vspace{-0.2em}
\begin{equation}
	\hat{\mathcal{G}}_{t^+} = \{a | a \in \hat{\mathcal{G}}_t, a < \Gamma_t\} \cup \{a | a \in \mathcal{S}_t, \hat{m}_{t^+}^a > 0\}.
	\label{eq32}
    \vspace{-0.2em}
\end{equation}

In Algorithm~\ref{alg1}, lines 5 to 21 implement Step 2 of practical T-DFSA. This includes solving the optimization in \eqref{eq27} using exhaustive search. The minimum number of active nodes in frame $t$ is set to be $n_S + 2 n_C$ since there must be at least two contesting active users in each collision slot (line~5). Note that the incurred time complexity is minimal due to the generally short frame lengths.\vspace{-0.6em}

\subsection{Step 3: Derivation of $\hat{f}_{t+1}^a$ using Generation Rates}
\label{sec4.3}
\fontdimen2\font=0.54ex
Estimates of $\hat{f}_{t^+}^a$ quantify the impact of observations in frame $t$ but not the generation rates of~status updates in this frame. In the proposition below, we exploit $\hat{f}_{t^+}^a$ and the update rates $\lambda$ to obtain the estimations for $\hat{f}_{t+1}^a$.\vspace{-0.2em}
\begin{prop}
\label{prop2}
	Using $\hat{f}_{t^+}^a$ derived in \eqref{eq31}, the values of $\hat{f}_{t+1}^a$ can be estimated as follows:\vspace{-0.3em}
	\begin{equation}
	\hat{f}_{t+1}^a = (1 - \lambda)^{w_t} \hat{f}_{t^+}^a + \sum_{b=0}^{a-1} \hat{f}_{t^+}^b \sum_{c = c_{min}}^{c_{max}} p_c\, \tilde{p}_h,
	\label{eq33}
	\vspace{-0.2em}
\end{equation}
where $p_c = \lambda(1 - \lambda)^{c-1}$ and $h = c + a - b - w_t$. Here, $\tilde{p}_h$ is the probability that the age of the node at the start of frame~$t$ is $h$ and is given by:\vspace{-0.2em}
	\begin{equation}
  \tilde{p}_h =
  \begin{cases}
    \dfrac{\lambda (1 - \lambda)^{h-1}}{1 - (1 - \lambda)^{h_{max}}}, & \text{if $1 \leq h \leq h_{max}$} \\[1pt]
    0,       & \text{if otherwise},
  \end{cases}
  \label{eq34}
  \vspace{-0.2em}
\end{equation}
where $h_{max} = \max_i x_t^i = \min \{\max_i x_0^i + k_t, \max_i y_t^i - b\}$, $c_{min} = (b - a + w_t)^+ + 1$, $c_{max} = \max \{(h_{max} - a + b)^- + w_t, 1\}$, $(z)^+ = \max \{0, z\}$, and $(z)^- = \min \{0,z\}$.
\end{prop}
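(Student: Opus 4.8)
The plan is to compute $\hat{f}_{t+1}^a=\Pr(g_{t+1}^i=a)$ for a typical node by the law of total probability, conditioning on the node's age-gain $b$ at the end of frame $t$ and on the update-generation pattern within the $w_t$ slots of the frame. The conceptual bridge I would use is that, by the construction of $\hat{f}_{t^+}$ in \eqref{eq30}, the value $b$ is exactly the age-gain the node would carry into frame $t+1$ \emph{if it generated no fresh update} during frame $t$: a successful node is reset to $b=0$ and every other node retains $b=g_t^i$. I would therefore split the event $\{g_{t+1}^i=a\}$ into the \emph{no-update} case (no generation in the frame, probability $(1-\lambda)^{w_t}$) and the \emph{update} case (at least one generation).

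In the no-update case I would verify via \eqref{eq1} and \eqref{eq2} that the age-gain is preserved. For a failed or inactive node, $y_{t+1}=y_t+w_t$ and $x_{t+1}=x_t+w_t$ give $g_{t+1}=y_t-x_t=b$, while for a successful node ($b=0$) the recursions yield $g_{t+1}=0=b$. Weighting $\hat{f}_{t^+}^b$ by the indicator $a=b$ reproduces the first term $(1-\lambda)^{w_t}\hat{f}_{t^+}^a$.

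The heart of the argument is the update case. I would let $c=x_{t+1}^i$ be the new AoI, i.e. the number of slots from the last fresh generation to the end of the frame, so that by \eqref{eq1} we have $c\in\{1,\dots,w_t\}$ with probability $p_c=\lambda(1-\lambda)^{c-1}$ (a generation in the corresponding slot followed by $c-1$ idle slots). Writing $h=x_t^i$ for the AoI at the \emph{start} of the frame and applying \eqref{eq2} regime by regime, I would evaluate $g_{t+1}=y_{t+1}-x_{t+1}$: for a non-successful node $y_t=b+h$ gives $g_{t+1}=(y_t+w_t)-c=b+h+w_t-c$, whereas for a successful node ($b=0$) $y_{t+1}=x_t+w_t$ gives $g_{t+1}=(h+w_t)-c$. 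In both regimes the requirement $g_{t+1}=a$ collapses to the single relation $h=c+a-b-w_t$, precisely the index carried by $\tilde{p}_h$. Since $h\ge1$ and $c\le w_t$ force $a=b+h+(w_t-c)\ge b+1$, a fresh update strictly raises the age-gain, so the outer sum runs over $b=0,\dots,a-1$. I expect this unification across the two regimes of \eqref{eq2} to be the main obstacle, as the AP-side AoI evolves differently for successful and failed nodes yet must produce the identical transition relation.

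It remains to supply the law of $h$ and to fix the summation limits. Because generations are i.i.d.\ Bernoulli$(\lambda)$, $x_t^i$ is geometric, and conditioning on the admissible range $1\le h\le h_{max}$ gives the truncated geometric $\tilde{p}_h$ of \eqref{eq34}; here $h_{max}$ is the largest AoI consistent both with the elapsed time $\max_i x_0^i+k_t$ and with the bound $x_t=y_t-b\le\max_i y_t^i-b$. Treating the pre-frame generation (which fixes $h$) as independent of the in-frame generation (which fixes $c$), the update-case contribution for a fixed $b$ is $\sum_c p_c\,\tilde{p}_h$, and imposing $1\le c\le w_t$ together with $1\le c+a-b-w_t\le h_{max}$ yields $c_{min}=(b-a+w_t)^++1$ and $c_{max}=\max\{(h_{max}-a+b)^-+w_t,1\}$. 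Combining the no-update and update contributions via total probability assembles \eqref{eq33}. The one modeling point I would flag explicitly is that this derivation uses the AP's \emph{marginal} estimate $\hat{f}_{t^+}^b$ together with the marginal age law $\tilde{p}_h$, i.e. it treats the age-gain at $t^+$ as independent of the starting age $h$, rather than appealing to a joint distribution the AP does not maintain.
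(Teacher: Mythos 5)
Your proposal is correct and follows essentially the same route as the paper's proof: the same split into the no-update case (weight $(1-\lambda)^{w_t}$, $a=b$) and the update case with $c=x_{t+1}^i$ geometric, the same change of variables $h=c+a-b-w_t$, the same truncated-geometric law for $h$, and the same derivation of $c_{min}$, $c_{max}$, and $h_{max}$ from $x_t^i\geq 1$, $x_t^i\leq k_t+\max_i x_0^i$, and $g_t^i\geq g_{t^+}^i$. The only cosmetic difference is that the paper folds the successful/non-successful regimes into a single algebraic decomposition $g_{t+1}^i=g_{t^+}^i+(x_{t^+}^i-x_{t+1}^i)$ via the end-of-frame quantities, whereas you verify regime-by-regime that both collapse to the same transition relation --- the same argument made explicit.
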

\begin{proof}
	See Appendix~\ref{appE}.
\end{proof}
\begin{figure*}[!t]
	\centering
	\includegraphics[width=0.9\textwidth]{./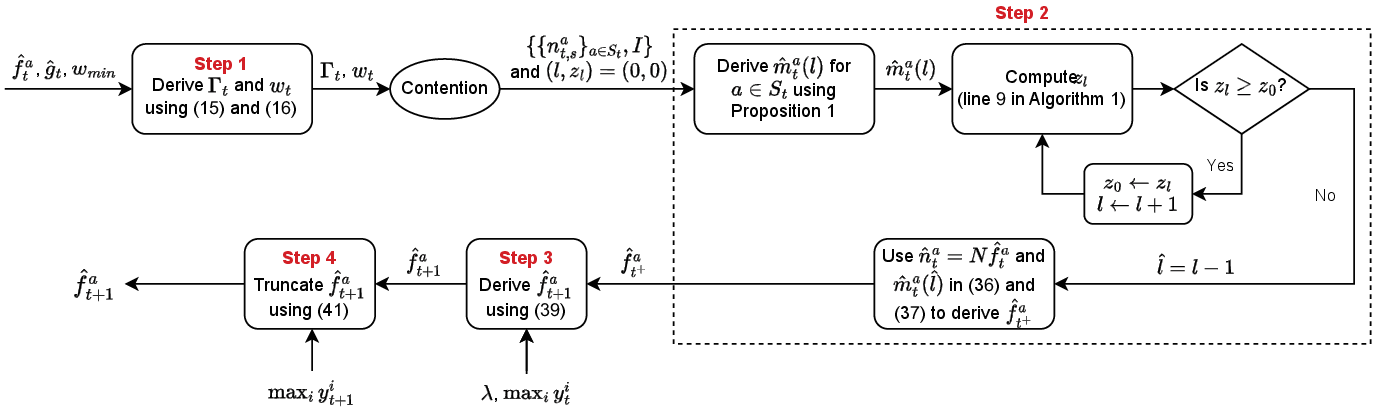}
	\vspace{-0.7em}
	\caption{Flowchart of T-DFSA illustrating the four main steps given in Algorithm~\ref{alg1}.}
	\label{fig:flowchart}
	\vspace{-0.3em}
\end{figure*}

In Proposition~\ref{prop2}, $p_c$ essentially captures the probability of a fresh update being generated at a~typical node in time slot $k_{t+1} - c$ within frame $t$. Moreover, the age of a typical node at the onset of frame~$t$, i.e. $h$, establishes a relation between the age-gain at the onset of that frame ($a$), the age-gain just before the end of the frame ($b$), $w_t$, and $c$ (more details in Appendix~\ref{appE}). In the derived upper bound for $h$, $\max_i x_0^i + k_t$ refers to the maximum possible value of $h$ when no updates are generated~by any of the $N$ sources up to time slot $k_t$\footnote{In T-DFSA, the nodes only retain fresh updates at time $0$, i.e. $x_0^i=1$. As a result, the maximum value $\max_i x_0^i$~is known at the AP, and there is no need for the AP to acquire $\{x_0^i\}_i$. This assumption is specifically related to the transient phase~of the network and does not impact the steady-state AAoI of the network.}. On the other hand, we have $g_t^i \geq g_{t^+}^i$, which means that~the age-gain may fall at the end of the frame as a result of a successful transmission. Then, given $g_{t^+}^i = b$ and $g_t^i = y_t^i - x_t^i$, we establish that $x_t^i \leq y_t^i - b$. Hence, the expression $\max_i y_t^i - b$ indicates the highest feasible value of $h$ in relation to the $y_t^ i$ values. Interestingly enough, the former limit dominates in the initial stages of practical T-DFSA, while the latter gains prominence when the protocol is implemented for a sufficiently large time interval.
\begin{rem}
    In the upper bound expression $\max_i y^i_t - b$, we~assume that $b \leq \max_i y^i_t - 1$ to ensure that $h_{\max} \geq 1$. This implies $\hat{f}^b_{t^+} = 0$ for $b \leq \max_i y^i_t - 1$. We can satisfy this condition in Step 4 of Algorithm~\ref{alg1}, where $\hat{f}^a_t$ is truncated at the end of frame $t-1$ to guarantee $\hat{f}^a_t = 0$ for $a > \max_i y^i_t$. Additionally, considering that $g_t^i \geq g_{t^+}^i$, we can conclude that $\hat{f}^b_{t^+} = 0$ for $b > \max_i y^i_t - 1$.
    \label{rem1}
\end{rem}
\vspace{-1.8em}
\begin{rem}
In Proposition~\ref{prop2}, $h_{max}=\max_i y^i_t-b$ is in steady state. Thus, the inequality $h = c + a - b - w_t \leq h_{max}$ implies~that $a \leq \max_i y^i_t + w_t - c \leq \max_i y^i_t + w_t - 1$. Consequently, $\hat{f}_{t+1}^a = 0$~holds for $a > \max_i y^i_t+w_t$.
\label{rem2}
\end{rem}

The above remarks highlight the importance of considering an upper bound for $h$ by utilizing information on $\max_i y^i_t$. This approach allows us to avoid computing $\hat{f}^a_{t+1}$ for large values~of $a$,~which could lead to an overestimation of the threshold $\Gamma_t$. Overestimating the threshold would result in~the inability to transmit backlogged updates, thus leading to their replacement with newer updates.~In turn, more backlogged nodes with higher age-gains would be spawned which may render the network unstable. Note that the network is said to be stable if and only if $\lim_{t \rightarrow \infty} \sum_{i = 1}^N y_t^i / N = 0$. Furthermore, the upper bound on the age-gains in this step draws out from $\max_i y_t^i$ and $\max_i x_0^i$ (line 23 of Algorithm~\ref{alg1}). The subsequent step further refines the range of age-gains in $\hat{f}_{t+1}^a$ using $\max_i y_{t+1}^i$.\vspace{-1em}

\subsection{Step 4: Truncation of $\hat{f}_{t+1}^a$}
\label{sec4.4}
\fontdimen2\font=0.54ex
The last step (line 25 of Algorithm~\ref{alg1}) truncates $\hat{f}_{t+1}^a$ by~reducing the range of potential $a$ values using the data in $y_{t+1}^i$ that is available at the AP. Beyond any doubt, we know that the maximum age-gain at the AP at the start of frame $t+1$ cannot exceed $a_{max} \triangleq \max_i y_{t+1}^i - 1$, where the~maximum value is achieved when $u^i(k_{t+1} - 1) = 1$ for $i = \argmax_i y_{t+1}^i$. Thus, we update $\hat{f}_{t+1}^a$ as follows:\vspace{-0.2em}
\begin{equation}
  \hat{f}_{t+1}^a =
  \begin{cases}
    \dfrac{\hat{f}_{t+1}^a}{\sum_{a = 0}^{a_{max}}\hat{f}_{t+1}^a}, & \text{ if $0 \leq a \leq a_{max}$}  \\[1pt]
    0,       & \text{ if otherwise}.
  \end{cases}
  \label{eq35}
  \vspace{-0.3em}
\end{equation}
The truncation in \eqref{eq35} profoundly impacts the stability~of T-DFSA since it prevents overestimating the threshold. The flowchart of the four steps in T-DFSA is shown in Figure \ref{fig:flowchart}.\vspace{-0.6em}

\section{Discussions on Complexity and Heterogeneity}
\label{sec5}
\fontdimen2\font=0.54ex

\subsection{Complexity Analysis of Stable T-DFSA}
\label{sec5_1}
\fontdimen2\font=0.54ex
As was briefly pointed out earlier, stability is a crucial~requirement in random access protocols.~Our numerical findings show that the T-DFSA protocol is highly likely to be stable if the initial age-gain values are chosen to be sufficiently diverse. Even if the initial age-gain values are not diverse enough,~instability only occurs in the early phases, when congestion may arise. In this case, the AP can detect instability when there are multiple consecutive frames with no successful transmission. The AP~can reset the source nodes by instructing them to discard their~previous updates and retain only the most recent ones, if any, before resuming the protocol.

\begin{rem}
The complexity of the T-DFSA algorithm in frame $t$ can be expressed as $O(n_S N) + O(w_t \max_i y^i_t)$.
\label{rem3}
\end{rem}
The first term of the complexity introduced in Remark~\ref{rem3}~pertains to finding the optimal value of~$l$ (i.e., the number of active nodes) in the loop of Step 2 in Algorithm~\ref{alg1} (lines 5 to 21). Within each iteration of the loop, the main computational burden lies in computing $z_l$ in line 9, while executing Algorithm~\ref{alg2} incurs low complexity. This is because in an ideally stable T-DFSA, the frame lengths are often $w_{min} + 1$ or $w_{min} + 2$, thus resulting in $n_S \leq w_{min} + 1$. Hence, the $\mathbf{while}$~loop in~Algorithm~\ref{alg2}, repeated for $r < n_S$ iterations, has a low time complexity. It should be noted that $O(n_S N)$ is a theoretical bound, and the search for the optimal $l$ does not extend up to $N$ since frame lengths~are typically small, thus restricting the number of active nodes when $n_S > 0$. Only when all slots of a frame are marked as collision slots ($n_S = 0$), will the search proceed up to $N$. However, as shown in Section~\ref{sec6}, this occurrence is exceedingly rare and has an infinitesimal probability in stable T-DFSA scenarios. The second term of the complexity, $O(w_t \max_i y^i_t)$, relates to Step 3 of Algorithm~1, which involves computing $\hat{f}^a_{t+1}$ for all $a \leq \max_i y_t^i + w_t - 1$. This complexity increases as $\lambda$ decreases~and $N$ grows. As $\lambda$ decreases, the inter-arrival times between updates at the nodes increase, leading to larger values of $x_t^i$ and consequently larger $y_t^i$. On the flip side, as $N$ grows, more contentions occur among nodes, causing them to transmit updates with larger delays which inevitably results in larger AoIs at the AP (i.e., larger $\max_i y_t^i$).\vspace{-0.7em}

\subsection{Effect of Heterogeneity of Nodes}
\label{sec5_2}
\fontdimen2\font=0.54ex
In this work, we have primarily focused on the \textit{symmetric} scenario, where all nodes exhibit identical arrival rates and~engage in symmetric transmission for contention. However, our proposed algorithm can be extended to accommodate scenarios with diverse node types. In such heterogeneous settings, each node type $x \in \mathcal{X}$ has a specific arrival rate, $\lambda_x$, and transmission probability, $p_x$. In fact, the only symmetric aspect that needs~to be preserved in our approach is the symmetric behavior of active nodes for contention, where they select one slot randomly within the frame for transmission. This is crucial because the results in Lemma 1 and Proposition 1 rely on this assumption.~In~the heterogeneous case, each node of type $x$ transmits with a probability of $p_x$ when its age-gain exceeds $\Gamma_t$. Since nodes of the same type are symmetric, we estimate the probability $\hat{f}^{a,x}_t$ instead of $\hat{f}^a_t$, which signifies the likelihood that a node of type $x$ has an age-gain of $a$ at the beginning of frame $t$. Then, $\hat{n}^{a,x}_{t,on} \!=\! N_x p_x \hat{f}^{a,x}_t$, where $N_x$ denotes the number of~nodes of type $x$, represents~the average number of active nodes of type $x$ with an age-gain of $a$, given that $a \geq \Gamma_t$. Other relationships~presented in the preceding sections can be similarly extended for the heterogeneous case.\vspace{-0.8em}

\begin{figure}[!t]
  \centering
  \subfloat[AAR versus $w_t$ and $\Gamma_t$.]{
    \includegraphics[width=0.45\columnwidth]{./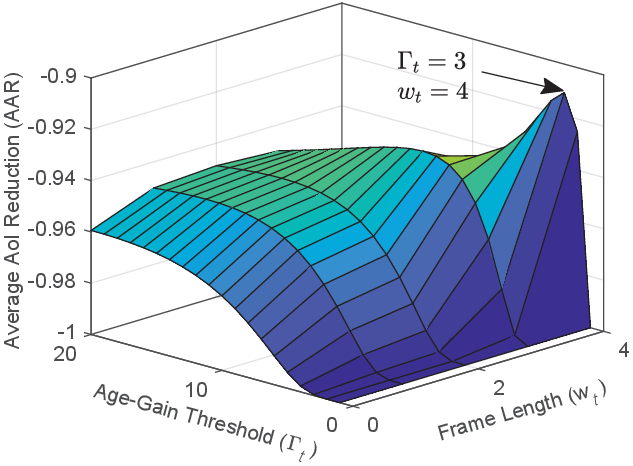}
    \label{fig:verify-lemma1}
  }
  \hfil
  \subfloat[$\Pr(\mathcal{L}'_t|\mathcal{L}_t,I)$ versus $n^2_t$ and $n^5_t$.]{
    \includegraphics[width=0.45\columnwidth]{./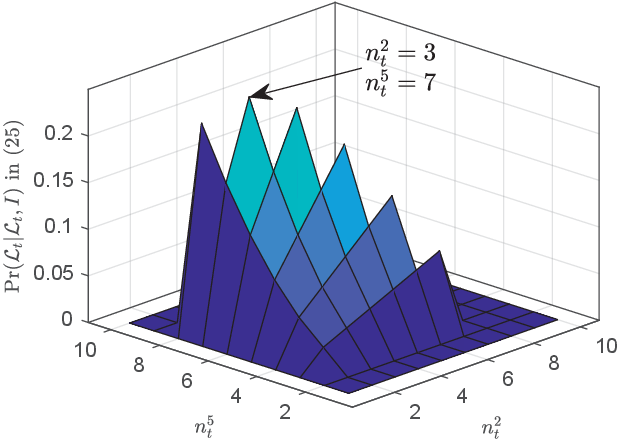}
    \label{fig:verify-prop1}
  }
  \caption{Verification of (a) Lemma~\ref{lemma1} where $N=20$, $M=4$, $n^0_t=4$, $n^1_t=1$, $n^2_t=n^3_t=6$, and $n^4_t=3$,~and (b) Proposition~\ref{prop1} where $\mathcal{S}_t=\{2,5\}$, $n^2_{t,s}=2$, $n^5_{t,s}=1$, $l=10$, and $w_t=10$.}
  \label{fig:mainfigure}
  \vspace{-0.2em}
\end{figure}
\section{Numerical Results and Discussions}
\label{sec6}
\fontdimen2\font=0.54ex
We now evaluate the performance of T-DFSA and benchmark it against the optimal FSA, TA \cite{Tahir2021}, SAT, and AAT~\cite{Chen2022}~protocols in terms of the network AAG, which is defined by the expression $\lim_{T \rightarrow \infty} \sum_{k=0}^T \sum_{i=1}^N (y_k^i - x_k^i)/(NT)$. Here, the first term $\lim_{T \rightarrow \infty} \sum_{k=0}^T \sum_{i=1}^N y_k^i/(NT)$ signifies the AoI of the network, whereas $\lim_{T \rightarrow \infty} \sum_{k=0}^T \sum_{i=1}^N x_k^i/(NT) = 1/\lambda$. The Monte Carlo simulations were run in the MATLAB environment for sufficiently long time intervals.

\figurename{~\ref{fig:verify-lemma1}} and \figurename{~\ref{fig:verify-prop1}} serve as verification for the validity of Lemma~\ref{lemma1} and Proposition~\ref{prop1}, respectively. In \figurename~\ref{fig:verify-lemma1}, we set $N=20$, $M=4$ (i.e., the maximum age-gain), $n^0_t=4$, $n^1_t=1$, $n^2_t=n^3_t=6$,~and $n^4_t=3$\footnote{These values have been deliberately chosen to be small for better visualization, and the curves are plotted as continuous~values.}. By varying the parameters $w_t$ and $\Gamma_t$ and simulating the contention process described~in Section~\ref{sec2}, the figure plots the AAR against $w_t$ and $\Gamma_t$. As observed in \figurename{~\ref{fig:verify-lemma1}}, the maximum~AAR is attained at $w_t=4$ and $\Gamma_t=3$, which~is consistent with the findings of Lemma~\ref{lemma1}. In \figurename{~\ref{fig:verify-prop1}},~another scenario is considered where $\mathcal{S}_t=\{2,5\}$, $n^2_{t,s}=2$, $n^5_{t,s}=1$, $l \!=\! 10$ (i.e., the number of active~nodes), and $w_t \!=\! 10$. In this figure, the probability $\Pr(\mathcal{L}'_t|\mathcal{L}_t,I)$ from \eqref{eq21} is tracked for various combinations of $n^5_t$ and $n^2_t$ such that $n^5_t+n^2_t \leq 10$. As depicted in \figurename{~\ref{fig:verify-lemma1}}, the probability is maximized when $n^2_t=3$ and $n^5_t=7$, thus aligning with the values proposed by Proposition \ref{prop1}. Hence, Proposition~\ref{prop1} suggests setting $n^2_t=\lfloor 10/3 \rfloor n^2_{t,s} +r_2=3\times 2+1$ and $n^2_t=\lfloor 10/3 \rfloor n^5_{t,s} +r_5=3\times 1+ 0$.
\begin{figure}[!t]
	\centering
 \includegraphics[width=0.6\columnwidth]{./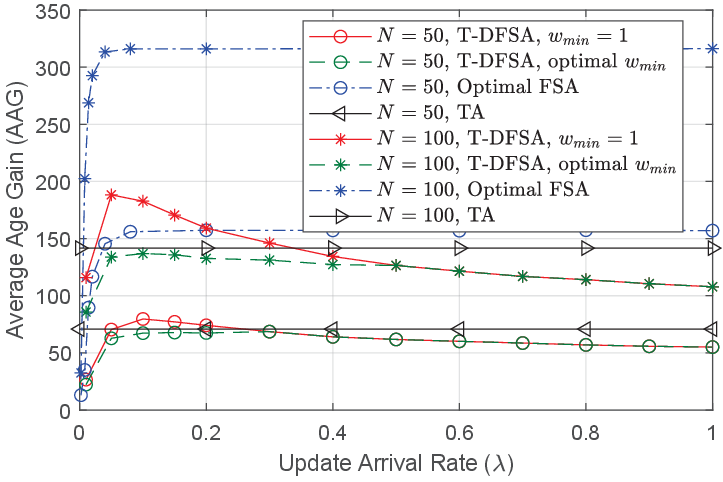}
	\vspace{-0.7em}
	\caption{AAG comparison of T-DFSA with optimal FSA and TA protocols under varying packet arrival rates ($\lambda$) for $N = \{50, 100\}$.}
	\label{fig2}
	\vspace{-0.3em}
\end{figure}

In \figurename~\ref{fig2}, the AAG is plotted against the varying packet arrival rate $(\lambda)$ for T-DFSA with $w_{min} \!=\! 1$ (as in \eqref{eq11}), T-DFSA with optimal $w_{min}$, optimal FSA, and TSA protocols. Note that the results for optimal FSA were determined via simulation. As evident in this figure, T-DFSA performs up to $65\%$ better than~optimal FSA which is apparent when operating at higher arrival rates. We see that the AAG drops in T-DFSA while remaining constant in optimal FSA, especially for $\lambda$ values greater than $0.1$. This is due to the fact that the increased~number of contentions in T-DFSA under high load are classified according to the age-gains and are managed to occur in different frames.~Contrarily, increasing $\lambda$ (which is akin to adding more newly arriving updates) in optimal FSA does not yield any advantage because of the increase in collisions. Moreover, the T-DFSA with $w_{min} \!=\! 1$ is~the optimal choice when $\lambda$ exceeds a threshold (which is $0.3$ for $N \!=\! 50$ and $0.5$ for $N \!=\! 100$), where this threshold is lower at lesser values of $N$. Recalling the discussion in Section~\ref{sec5}, this happens because the variance of the age-gains diminishes with~$\lambda$ and increases with $N$, leading to more accurate estimations of $\hat{n}_t^a$. Also note that the T-DFSA with optimal $w_{min}$ outperforms TA, especially at higher generation rates. This is because, unlike TA, which has a fixed threshold, T-DFSA adjusts~its threshold in every frame based on the status of the network, such as~the number of backlogged nodes and their age-gains. As such, collisions are managed more effectively, notably at higher arrival rates.

For T-DFSA with optimum $w_{min}$, and the age-based thinning SAT and AAT protocols, the~normalized AAG (i.e., AAG~divided by $N$) is plotted against the packet generation rate in \figurename~\ref{fig3}.~As detailed in \cite{Chen2022}, the nodes in each slot under the SAT policy discard freshly generated updates with age-gains below an~optimized threshold. Based on the estimated backlogged nodes, each node then transmits its update with probability $p_b$. On~the other hand, AAT uses feedback data on the slots' collision state to~adaptively adjust the indicated threshold in each slot. \figurename~\ref{fig3} shows how T-DFSA~outperforms SAT and AAT as $\lambda$ rises, though its performance with respect to AAT decreases when the arrival rate becomes very close to $1$. The observed superiority of T-DFSA implies that it resolves collisions more effectively for the benefit of age improvement. This happens because the age-gain threshold in AAT (and SAT) is chosen so that the effective generation rate equals $1/e$, which yields the maximum achievable throughput in SA. In this manner, while the throughput is maintained at its optimum~point, it is the competing nodes that possess larger age-gains. However, T-DFSA has two degrees of freedom due to the dynamic structure of its frames. First, to maximize the throughput, the frame length can be set to be equal to the number of active nodes.~Second, by adjusting $\Gamma_t$ so that only nodes with the highest age-gain can compete, the total number of active nodes can be determined. In fact, T-DFSA achieves superior AAG since the nodes are more finely categorized for contention in frames based on their age-gains, while maintaining the optimal throughput. Moreover, as $\lambda$ approaches unity, the variability in arrival patterns decreases. Thus, T-DFSA and AAT, which both employ adaptive thresholds, exhibit similar behavior under such conditions. It is also worth noting that, in contrast to AAT, the AP is responsible for determining the dynamic thresholds in T-DFSA rather than the nodes that may be power limited. Moreover, the estimates are updated at the start of each frame rather than each slot (as in AAT), leading to lesser power utilization.
\begin{figure}[!t]
	\centering
 \includegraphics[width=0.6\columnwidth]{./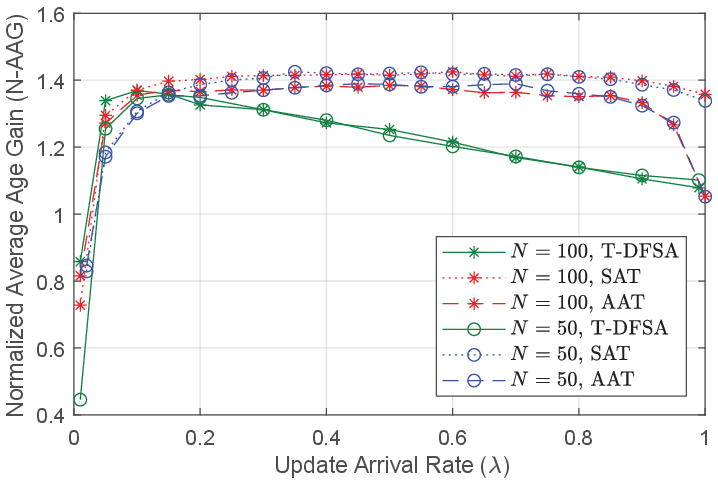}
	\vspace{-0.8em}
	\caption{Normalized AAG comparison of T-DFSA with age-based thinning protocols under varying packet arrival rates ($\lambda$) for $N = \{50, 100\}$.}
	\label{fig3}
	\vspace{-0.4em}
\end{figure}

\figurename~\ref{fig4} compares the AAG efficacy of the network under T-DFSA and TA with respect to $w_{min}$~for different $\lambda$ values. Under $N \!=\! 50$ and~$100$, the trends in \figurename~\ref{fig4}a and \figurename~\ref{fig4}b clearly show that $w_{min}$ reaches the optimal points of $1$, $2$, and $3$ as $\lambda$ drops from $0.6$~to $0.1$, respectively. Note that the AoI of a generic node can~be modeled by a geometric distribution with parameter $\lambda$ and variance $(1-\lambda)/\lambda^2$. In light of this, when the packet generation rate is low, the variation in node ages is large, resulting in high variances of age-gain as well as estimates of $\hat{f}_{t+1}^a$, derived in \eqref{eq33}. As a natural consequence of this, $w_{min} = 1$ may result in either entirely empty or collision slots. The opportune choice is to have $w_{min} \!=\! 1$ since the estimations $\hat{n}_t^a$ have smaller~variations at higher $\lambda$ rates and~are thus, more accurate. We also note that the AAG drops with $\lambda$ for $w_{min} \leq 3$ under T-DFSA, implying that fresh update generations are effectively being transmitted to reduce the AAoI. Higher values of $w_{min}$, however, compel T-DFSA to maintain a minimal frame length and prevent it from successfully classifying the available age-gains by altering $\Gamma_t$. Thereby, increasing $\lambda$ from $0.3$ to $0.6$ does not impact the AAG significantly. Interestingly, even at greater values of $w_{min}$, T-DFSA performs reasonably better than optimal FSA. Indeed, optimal FSA is shown to deliver AAGs~of $160$ and $320$ for $N= 50$ and $100$, respectively, which are much larger than the comparable AAGs of T-DFSA when $w_{min} \leq 15$.
\begin{figure}[t]
	\centering
	\includegraphics[width=0.6\columnwidth]{./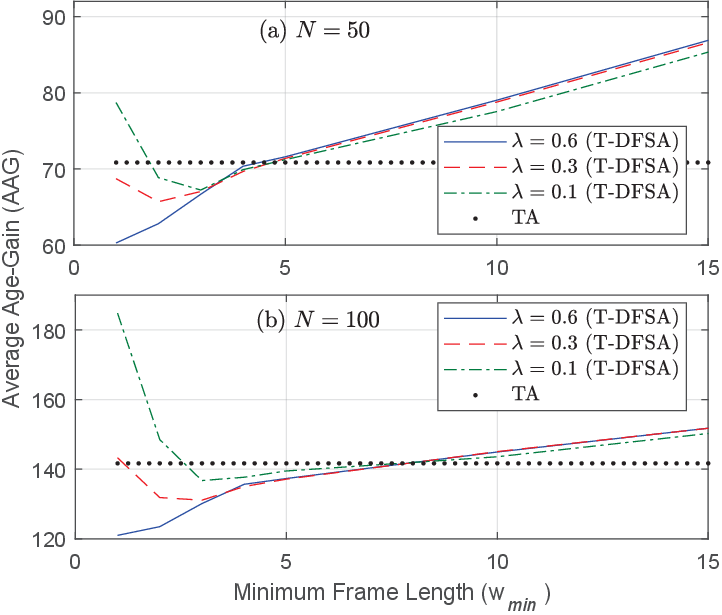}
	\vspace{-0.7em}
	\caption{AAG comparison of T-DFSA with TA protocol under varying $w_{min}$ values for $N = \{50, 100\}$ and $\lambda = \{0.1, 0.3, 0.6\}$.}
	\label{fig4}
	\vspace{-0.3em}
\end{figure}

In \figurename~\ref{fig5}, the frame length distribution is plotted for various values of $N$ and $\lambda$ while accounting for the optimal $w_{min}$. Note that the minimum frame length is $w_{min} \!+\! 1$. As an example, for $N \!=\! 100$ and $\lambda \!=\! 0.3$, we have $w_{min} \!=\! 3$ and thus, a~minimum frame length of $4$. We observe that $w_{min}$ reduces with $\lambda$ and grows with $N$. For instance, by fixing $\lambda \!=\! 0.8$, $w_{min}$ values for $N \!=\! 300$ and $N \!=\! 100$ are $1$ and $2$, respectively. The figure also discriminates between the variance in frame~lengths at different update generation rates. Particularly, the variance in frame length is greater at lower generation rates, such as $\lambda=0.05$, whereas it decreases at higher $\lambda$ values.
\begin{figure}[t]
	\centering
	\includegraphics[width=0.6\columnwidth]{./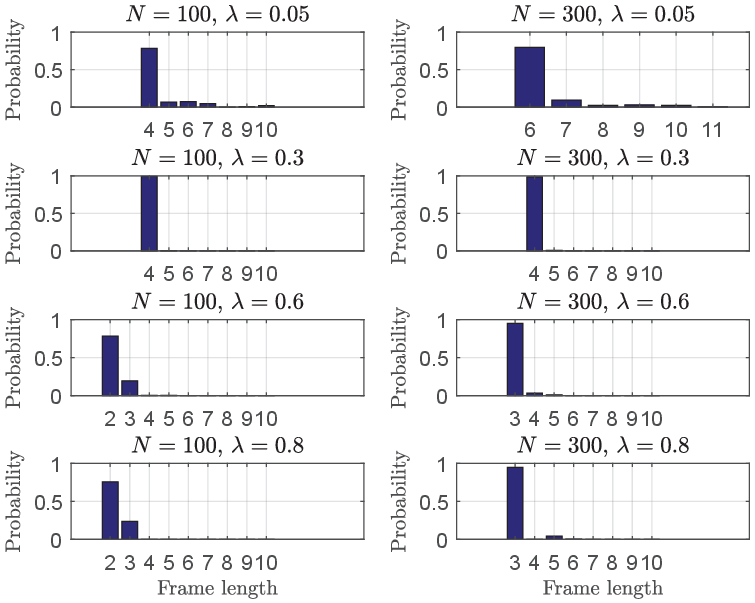}
	\vspace{-0.7em}
	\caption{Probability mass function of the frame length in T-DFSA for $N = \{100, 300\}$ and $\lambda = \{0.05, 0.3, 0.6, 0.8\}$.}
	\label{fig5}
\end{figure}
\begin{figure}[t]
	\centering
	\includegraphics[width=0.6\columnwidth]{./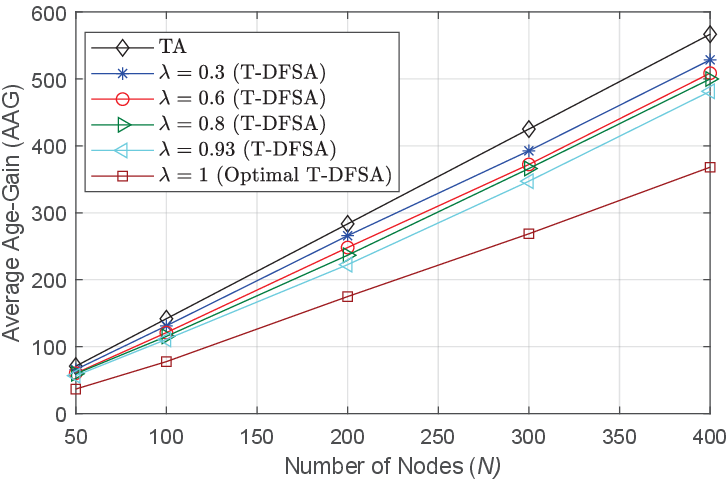}
	\vspace{-0.7em}
	\caption{AAG comparison of T-DFSA with TA protocol under varying number of source nodes ($N$) for $\lambda = \{0.3, 0.6, 0.8, 0.93, 1\}$.}
	\label{fig6}
    \vspace{-0.4em}
\end{figure}

\figurename~\ref{fig6} evaluates the effect of the number of source nodes that generate updates at different rates~on the AAG under T-DFSA with optimal $w_{min}$. Not only does the AAG increase linearly with $N$, but~the simulation results also suggest the slopes of the plotted lines decrease with $\lambda$. This implies that as~$\lambda$ rises, the estimations of $\hat{n}_t^a$ get more accurate. At $\lambda \!=\! 1$, the AP can accurately detect $n_t^a$ since the ages of the nodes are known to be one. In theory, one might expect the performance to~resemble that of an ideal T-DFSA, where the age-gains are assumed to be known at the AP. Under such load,~however, the practical T-DFSA, which relies on $\max_i y_t^i$ instead of individual $y_t^i$~values for all nodes, does not achieve the same performance as the ideal T-DFSA. On that premise, T-DFSA yet outperforms TA the most, under the high rate of $\lambda \!=\! 0.93$, by significantly lowering the AAG ($\sim 20\%$ improvement).
\begin{figure}[t]
	\centering
	\includegraphics[width=0.6\columnwidth]{./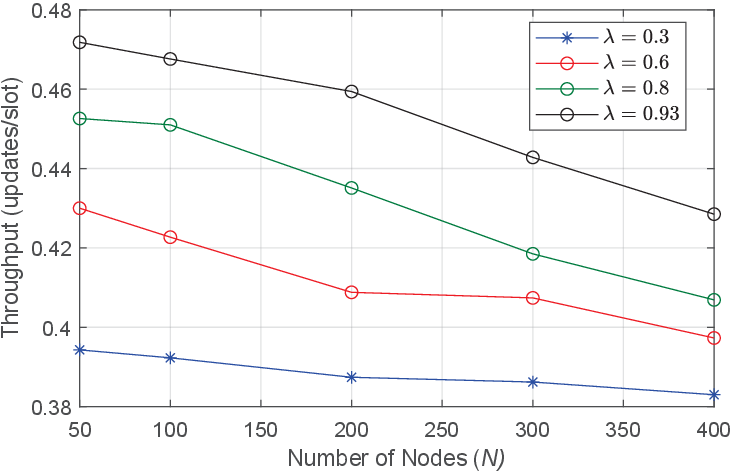}
	\vspace{-0.7em}
	\caption{Throughput of T-DFSA with optimal $w_{min}$ under varying number of source nodes ($N$) for $\lambda = \{0.3, 0.6, 0.8, 0.93\}$.}
	\label{fig7}
\end{figure}

\figurename~\ref{fig7} depicts the throughput of T-DFSA at optimal $w_{min}$~in terms of $N$ achieved under different $\lambda$ values. Notice that~the overall throughput decreases with both, $N$ and $\lambda$. This is~underpinned by the fact that the number of backlogged nodes and thus, the diversity of age-gains increase in both cases. As a result, a typical backlogged node waits longer to transmit because T-DFSA takes more time to assign frames to nodes with higher age-gains, which eventually limits the throughput. The throughput under T-DFSA is also shown to be higher than the $e^{-1}$ optimal throughput attainable with the FSA protocol. This is plainly because $e^{-1}$ is derived assuming that a relatively~large number of source nodes ($N > 10$) are competing in each frame. However with T-DFSA, the frame lengths are short, resulting in improved throughput. Albeit, if we impose the minimum~frame length constraint as indicated in Section~\ref{sec5}, the throughput achievable under T-DFSA is reduced.
\begin{figure}[!t]
  \centering
  \subfloat[N-AAoI versus $ 0.1 \leq N\lambda \leq 1$.]{
    \includegraphics[width=0.6\columnwidth]{./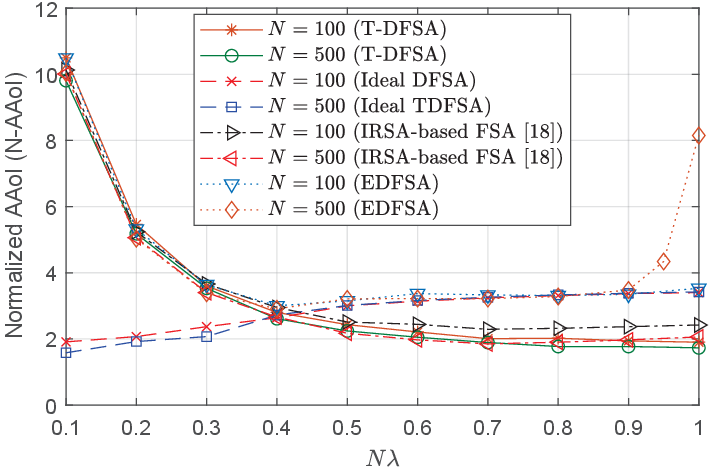}
    \label{fig:c1}
  }
  
  \subfloat[N-AAoI versus $ 0.01 \leq \lambda \leq 1$.]{
    \includegraphics[width=0.6\columnwidth]{./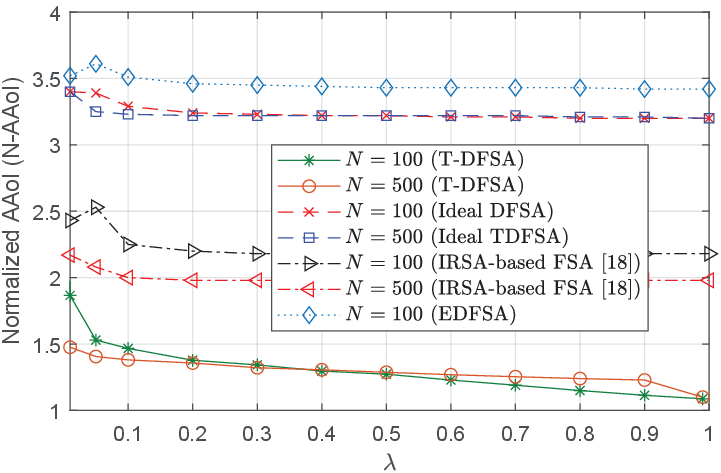}
    \label{fig:c2}
  }
  \caption{N-AAoI comparison of T-DFSA with ideal DFSA, IRSA-based FSA, and EDFSA protocols under varying packet arrival rates ($\lambda$).}
  \label{fig:compare}
  \vspace{-0.3em}
\end{figure}

Finally, \figurename~\ref{fig:compare} and Table~\ref{tab:c} provide a comprehensive comparison of the normalized AAoI (N-AAoI) for the network, which is calculated by dividing the AAoI by $N$, for T-DFSA, ideal DFSA (i.e., DFSA assuming perfect knowledge of $n_t$), IRSA-based FSA \cite{Andrea2021}, and enhanced DFSA (EDFSA) \cite{lee2005enhanced} protocols. In EDFSA, the frame size dynamically increases with the number of backlogged nodes until reaching 256, after which it remains fixed, permitting only a subset of backlogged nodes to transmit in each frame based on their backlog status. 
 Moreover, note that in the case of IRSA-based FSA, the frame size is fixed, and the N-AAoIs are computed considering the frame size that minimizes the N-AAoI while enabling the successful implementation of SIC. \figurename~\ref{fig:c1} and \figurename~\ref{fig:c2} are dedicated to different ranges of $\lambda$ values: the former covers scenarios where $N \!\lambda\! \leq 1$, and the latter focuses on $\lambda$ values~greater than or equal to $0.01$. These figures cover all values of $N\lambda$ for a network with $N \!=\! 100$ and for cases where $N\lambda \!\leq\! 1$ and $N\lambda \!\geq\! 5$ for $N \!=\! 500$. Furthermore, Table~\ref{tab:c} provides N-AAoI data for $1 \leq N\lambda \leq 5$ with $N=500$. \figurename~\ref{fig:c1} shows that T-DFSA and IRSA-based FSA outperform ideal DFSA and EDFSA when $N\lambda \geq 0.4$, with T-DFSA exhibiting superior performance compared to IRSA-based FSA at these values. Additionally, \figurename~\ref{fig:c2} and Table~\ref{tab:c} highlight that the performance superiority of T-DFSA over ideal DFSA, IRSA-based FSA, and EDFSA becomes more pronounced as $N\lambda$ exceeds $1$. It is noteworthy that when $N=500$, EDFSA becomes unstable at $N\lambda>1 $, and thus, the relevant data is not shown in \figurename~\ref{fig:c2} and Table~\ref{tab:c}. With higher arrival rates to the network, there are more fresh updates available. T-DFSA leverages this opportunity by allowing only the fresher updates with larger age-gains to compete in each frame, resulting in a higher chance of successful transmission.
\begin{table}[!t]
    \centering
    \caption{N-AAoI at $N=500$ and $2 \leq N\lambda \leq 5$ under ideal DFSA, IRSA-based FSA, and T-DFSA protocols.}
    \label{tab:c}
    \begin{tabular}{|l||c|c|c|c|c|}
        \hline
        $N\lambda$ & 1 & 2 & 3 & 4 & 5 \\ \hline\hline
        Ideal DFSA & 3.42 & 3.51 & 3.47 & 3.44 & 3.40 \\ \hline
        IRSA-based FSA [17] & 2.06 & 3.0 & 2.28 & 2.21 & 2.17 \\ \hline
        T-DFSA & 1.73 & 1.53 & 1.50 & 1.48 & 1.47 \\ \hline
    \end{tabular}
    \vspace{-0.2em}
\end{table}

\figurename~\ref{fig:comp} shows the normalized complexity per slot as a function of the packet generation rate for different values of $N$. The normalized complexity is derived by summing the complexities across~all frames and then dividing by the total number of slots. According to Remark~\ref{rem2}, $\hat{f}^a_{t}$ is computed for~all $a \leq \max y^i_{t-1}+w_{t-1}$ in Step 3, and after further truncation in Step 4, $\hat{n}^a_{t}=N\hat{f}^a_{t}$ is used in \eqref{eq11} to determine $\Gamma_t$. In \eqref{eq11}, $\Gamma_t$ is set to the largest $\Gamma'$ that satisfies the inequality $ \sum_{a \in \hat{\mathcal{G}}_t: a \geq \Gamma'} \hat{n}_t^a \geq w_{min}$, which can be rewritten as:\vspace{-0.2em}
\begin{equation}
	\sum_{a \in \hat{\mathcal{G}}_t: \Gamma' \leq a < a'} \hat{n}_t^a+\sum_{a \in \hat{\mathcal{G}}_t: a \geq a' } \hat{n}_t^a \geq w_{min},
	\label{eq_last}
 \vspace{-0.2em}
\end{equation}
where $a'$ is the smallest age-gain that satisfies the inequality $ \sum_{a \in \hat{\mathcal{G}}_t: a \geq a'} \hat{n}_t^a \leq 1$. Since $w_{min} \geq 1$, $\Gamma_t$ is guaranteed to~be greater than $a'$, thus omitting the calculations in Step 3 for all values of $a \geq a'$, i.e., only the summation $\sum_{a \in \hat{\mathcal{G}}_t: a' \leq a } \hat{n}_t^a$ is sufficient. Now, the first data point on the three curves in \figurename~\ref{fig:comp} corresponds to $\lambda=0.001$. Therefore, the total arrival rate $N \lambda$ equals $0.1$ for $N \!=\! 100$, which is less than the~maximum achievable throughput of DFSA, i.e., $e^{-1}$. In this case, almost all updates are transmitted with high probability, resulting in a very small average number of backlogged nodes. Hence, we set $w_{min}=1$. Additionally, in many frames, the estimated number of backlogged nodes is less than one, thus leading to a frame length of one. As a result, the calculations in Step 3 are significantly reduced using the aforementioned technique. However, as $\lambda$ increases and $N\lambda$ surpasses the maximum throughput~of $e^{-1}$, the complexity initially increases. But as explained in Section~\ref{sec5}, the value of $\max_i y^i_t$ decreases with $\lambda$, resulting in a subsequent reduction in complexity. Undoubtedly, the figure also shows that the complexity increases with $N$.\vspace{-0.2em}
\begin{figure}[!t]
	\centering
	\includegraphics[width=0.6\columnwidth]{./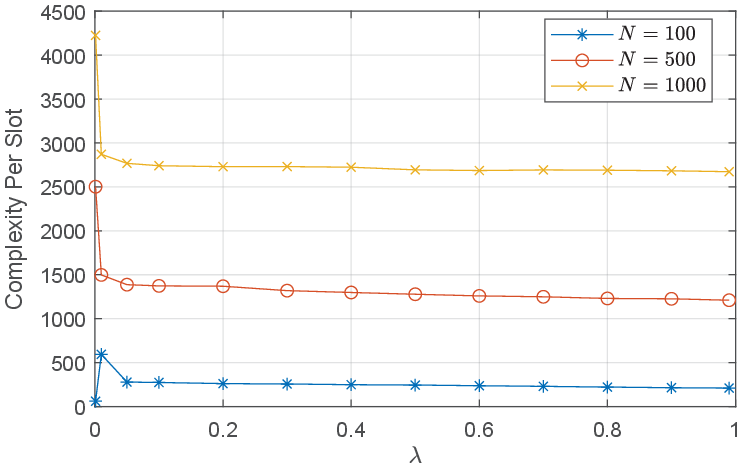}
	\vspace{-0.7em}
	\caption{Normalized complexity per slot versus arrival rate ($\lambda$) for $N= \{100, 500, 1000\}$.}
	\label{fig:comp}
	\vspace{-0.3em}
\end{figure}

\section{Conclusion}
\label{sec7}
\fontdimen2\font=0.54ex
We developed an age-aware \emph{threshold-based} DFSA protocol for large-scale IoT networks with stochastic update arrivals.~The AP broadcasts the frame length and a threshold at the start~of each frame in the proposed T-DFSA, designating nodes with age-gains larger than the threshold as suitable nodes to transmit. The average number of nodes with various age-gains at the beginning of the frames is estimated using a four step approach by the AP in order to determine the threshold and frame length. The information observed throughout the frame, the age-gains of successful nodes, and AoIs at the AP are all used in this procedure, together with any available information on the node arrival rate. The presented analysis revealed that T-DFSA becomes stable provided the initial condition of the system is sufficiently diverse. Numerical results showed that T-DFSA significantly improves the~age performance, with gains of up to $65\%$ compared to its optimal FSA counterpart. Furthermore, T-DFSA demonstrated to be a practical solution for power-limited networks by surpassing the TA,~SAT, and AAT baseline protocols in terms of AAG. Extending the analysis of AoI in T-DFSA to settings that involve nodes with different sampling behaviors and age requirements using machine learning driven techniques is worthy of further investigation.\vspace{-0.4em}


\appendices
\section{Proof of Lemma~\ref{lemma1}}
\label{appA}
\fontdimen2\font=0.52ex
We first show that $w_t = n_t$ maximizes $\bar{R}_t$ for a given $\Gamma_t$ value. For now, we assume that $w_t$ is a continuous random variable. Differentiating \eqref{eq8} with respect to $w_t$ yields:\vspace{-0.2em}
\begin{equation}
	\frac{\partial \bar{R}_t}{\partial w_t} = \frac{\sum_{a=\Gamma_t}^M a n_t^a}{N w_t^2}\left(1 - \frac{1}{w_t}\right)^{n_t - 2}\left(\frac{n_t}{w_t} - 1\right).
	\label{eq40}
    \vspace{-0.2em}
\end{equation}
By equating $\partial \bar{R}_t/\partial w_t = 0$, we can conclude that, for a given $\Gamma_t$, setting $w_t = n_t$ maximizes $\bar{R}_t$.~Note that $w_t=1$ also satisfies $\partial \bar{R}_t/\partial w_t = 0$, but it minimizes $\bar{R}_t$, thus resulting in $\bar{R}_t=-1$. Thus, since $w_t=n_t$ maximizes $\bar{R}_t$ over all continuous values of $w_t$, it is also the optimal discrete value. Next, we optimize the value of $\Gamma_t$ by setting $w_t = n_t = \sum_{a=\Gamma_t}^M n_t^a$ in \eqref{eq8} to obtain:\vspace{-0.2em}
\begin{equation}
	\bar{R}_t^*(\Gamma_t) = \underbrace{\left(1 - \frac{1}{\sum_{a=\Gamma_t}^M n_t^a}\right)^{\sum_{a=\Gamma_t}^M n_t^a - 1}}_\textsl{x} \underbrace{\frac{\sum_{a=\Gamma_t}^M a n_t^a}{N \sum_{a=\Gamma_t}^M n_t^a}}_\textsl{y} - 1,
	\label{eq41}
 \vspace{-0.2em}
\end{equation}
where $\bar{R}_t^*(\Gamma_t)$ is the maximum of $\bar{R}_t$ for a given value of $\Gamma_t$. We now show that  $\bar{R}_t^*(\Gamma_t)$ is a decreasing function of $\Gamma_t$. Note that $n_t = \sum_{a=\Gamma_t}^M n_t^a$ is a decreasing function of $\Gamma_t$ since the number of active~nodes reduces as $\Gamma_t$ grows. Alternatively, for $n_t \geq 1$, the function $(1 - 1/n_t)^{n_t-1}$ increases with $n_t$. Thus,~the term $\textsl{x}$ in \eqref{eq41} is a decreasing function of $\Gamma_t$, given that $\sum_{a=\Gamma_t}^M n_t^a \geq 1$. Moreover, the term $\textsl{y}$ in \eqref{eq41} can be written as follows, which is a convex combination of the values $\{\Gamma_t, \Gamma_t + 1, \ldots, M\}$:\vspace{-0.2em}
\begin{equation}
	\sum_{a=\Gamma_t}^M a \frac{n_t^a}{\sum_{a=\Gamma_t}^M n_t^a}.
	\label{eq42}
    \vspace{-0.2em}
\end{equation}
As a result, it attains its maximum value whenever the coefficient of the highest value of $a$ in \eqref{eq42} is equal to one. This leads to $n_t^M / \sum_{a = \Gamma_t}^M n_t^a = 1$, which transpires only when $\Gamma_t = M$. This completes the proof.\vspace{-0.5em}

\section{Proof of Lemma~\ref{lemma2}}
\label{appB}
\fontdimen2\font=0.52ex
For \eqref{eq44} to be positive, we should have $n_t^a \geq n_{t,s}^a$. Thus, the optimal values of $\hat{m}_t^a$ should satisfy $\hat{m}_t^a \geq n_{t,s}^a$ for any $a \in \mathcal{S}_t$, implying that $\mathcal{S}_t \subset \hat{\mathcal{A}}_t$. In regard to $\sum_{a \in \hat{\mathcal{A}}_t} \hat{m}_t^a = l$, we should set $\hat{m}_t^a = 0$ for $a \notin \mathcal{S}_t$ in order to maximize the probability in \eqref{eq44}. As a result of this, we will have $\hat{\mathcal{A}}_t = \mathcal{S}_t$.

Let us now consider $a,b \in \hat{\mathcal{A}}_t$. Accounting for the fact that $\hat{\mathcal{L}}_t(l)$ is the optimal solution of \eqref{eq43}, flipping $b$ to $a$ in $\hat{\mathcal{L}}_t(l)$ (i.e., changing $\hat{m}_t^a$ to $\hat{m}_t^a + 1$ and $\hat{m}_t^b$ to $\hat{m}_t^b - 1$) should reduce the probability in \eqref{eq44}. Expressly, the difference between the probability before and after flipping can be stated as:\vspace{-0.2em}
\begin{equation}
	\frac{\prod_{c \in \hat{\mathcal{A}}_t, c \neq a,b} \binom{\hat{m}_t^c}{n_{t,s}^c}}{\binom{l}{n_S}}\left(\!\binom{\hat{m}_t^a \!+\! 1}{n_{t,s}^a} \binom{\hat{m}_t^b - 1}{n_{t,s}^b} - \binom{\hat{m}_t^a}{n_{t,s}^a} \binom{\hat{m}_t^b}{n_{t,s}^b}\!\right).
	\label{eq45}
 \vspace{-0.2em}
\end{equation}
By letting \eqref{eq45} to be less than or equal to zero, \eqref{eq24} can be obtained straightforwardly. This completes the proof.\vspace{-0.3em}

\section{Proof of Lemma~\ref{lemma3}}
\label{appC}
\fontdimen2\font=0.52ex
Suppose $\hat{m}_t^x / n_{t,s}^x = k'_x + r_x / n_{t,s}^x$, $\forall x \in \hat{\mathcal{A}}_t$, we have $k'_x, r_x \in \mathbb{Z}^+$ and $0 \leq r_x < n_{t,s}^x$. Using \eqref{eq44}, for any $a,b \in \hat{\mathcal{A}}_t$, we have:\vspace{-0.2em}
\begin{equation}
	\frac{r_a}{n_{t,s}^a} - \frac{r_b + 1}{n_{t,s}^b} \leq k'_b - k'_a \leq \frac{r_a + 1}{n_{t,s}^a} - \frac{r_b}{n_{t,s}^b}.
	\label{eq46}
    \vspace{-0.2em}
\end{equation}
Depending on the values taken by $r_x$, the following two cases may occur:
\begin{itemize}
	\item \emph{Case I:} When $r_x = 0$, $\forall x \in \hat{\mathcal{A}}_t$. By assigning $r_a = r_b = 0$ in  \eqref{eq46} along with the fact~that $n_{t,s}^a, n_{t,s}^b \geq 1$, we can conclude that $k'_b = k'_a$ (i.e., $\forall x, \hat{m}_t^x / n_{t,s}^x = k$) and thus, $k' = l / n_S$.
	\item \emph{Case II:} When $0 < r_a < n_{t,s}^a$ for some $a \in \hat{\mathcal{A}}_t$. From~\eqref{eq46} and the assertion that $(1+r_x) / n_{t,s}^x \leq 1$, we conclude that $\forall b \in \hat{\mathcal{A}}_t$, $|k'_b - k'_a| \leq 1$. Simply put, either $k'_b = k'_a$ or $k'_b = k'_a + 1$ holds, where the latter transpires if and only if $r_a = n_{t,s}^a - 1$ and $r_b = 0$. Notably, the case $k'_b = k'_a - 1$ never occurs since it mandates $r_a = 0$ and $r_b = n_{t,s}^b - 1$, which contradicts our assumption (i.e., $r_a > 0$). Subsequently, $\hat{m}_t^b, \forall b$, can be~expressed either as $k'_a n_{t,s}^b + r_b$ or $(k'_a + 1) n_{t,s}^b$. This is equivalent to writing $k'_a n_{t,s}^b + r'_b$, where $0 \leq r'_b \leq n_{t,s}^b$. Summing over $\hat{m}_t^x, \forall x \in \hat{\mathcal{A}}_t$, we attain $l = k'_a n_S + \sum_{x \in \hat{\mathcal{A}}_t} r'_x$, where $x=a$ and $r'_x = r_a$. Since for $a$, $r'_x = r_a < n_{t,s}^a$, we also have $\sum_{x \in \hat{\mathcal{A}}_t} r'_x < n_S$, which eventually leads to $k'_a = \lfloor l / n_S \rfloor$.
\end{itemize}\vspace{-0.4em}

\section{Proof of Proposition~\ref{prop1}}
\label{appE}
\fontdimen2\font=0.52ex
Let $x_{t^+}^i$, $y_{t^+}^i$, and $g_{t^+}^i$ denote, respectively, the packet age of node $i$, the age at the AP with respect~to node $i$, and the age-gain of node $i$ at the end of frame $t$, without considering the effect of new arrivals within the frame. Contrarily, $x_{t+1}^i$, $y_{t+1}^i$, and $g_{t+1}^i$, which are the corresponding values at the inception of frame $t+1$, take in the effect of new update arrivals. Accordingly, $\hat{f}_{t+1}^a$ can be written as:\vspace{-0.2em}
\begin{align}
	\hat{f}_{t+1}^a &= \Pr(y_{t+1}^i - x_{t+1}^i = a) \nonumber \\
	&= \sum_{b=0}^a \Pr(y_{t+1}^i - x_{t^+}^i = b)  \Pr(x_{t^+\!}^i - x_{t+1}^i = a - b \mid a, b) \nonumber \\
	&= \sum_{b=0}^a f_{t^+}^b \underbrace{\Pr(x_{t^+\!}^i - x_{t+1}^i = a - b \mid a, b)}_\textsl{x},
	\label{eq47}
    \vspace{-0.3em}
\end{align}
where the last equality is inferred noting that $y_{t+1}^i \!=\! y_{t^+}^i$ and $\hat{f}_{t^+}^b \!=\! \Pr(g_{t^+}^i \!=\! y_{t^+}^i - x_{t^+}^i \!=\! b)$. Additionally, because the age-gain is always boosted by new arrivals, we have $g_{t^+}^i \leq g_{t+1}^i$, which means that the values $b \leq a$ are taken into consideration in \eqref{eq47}. The probability term $\textsl{x}$ in \eqref{eq47} can further be expressed~as:\vspace{-0.2em}
\begin{equation}
  \begin{cases}
    (1 - \lambda)^{w_t}, & \text{ if $b = a$} \\
    \sum_c \underbrace{\Pr(x_{t+1}^i = c)}_\textsl{y} \underbrace{\Pr(x_{t^+}^i = c + a - b \mid a, b, c)}_\textsl{z},       & \text{ if $b < a$}.
  \end{cases}
  \label{eq48}
\end{equation}
The first case in \eqref{eq48} indicates that the age of node $i$ does not alter between $t^+$ and the onset of frame $t \!+\! 1$, as long as no~update is generated in frame $t$, with probability $(1 \!-\! \lambda)^{w_t}$. In the second case of \eqref{eq48}, the term $\textsl{y}$ represents the probability that the latest update of node $i$ is generated at time slot $k_{t+1} - c$ as per \eqref{eq1}, and is given by $p_c \triangleq \lambda(1 - \lambda)^{c-1}$. Likewise, the term $\textsl{z}$ in \eqref{eq48} is equivalent to $\Pr(x_t^i = c + a -b - w_t \mid a, b, c)$. By defining $h \triangleq c + a - b - w_t$, the term $\textsl{z}$ can be written as $\Pr(x_t^i = h \mid a, b, c)$ and refers to the probability that the latest update at node $i$ prior to slot $k_t$ is generated in slot $k_t - h$. Leaning on the following facts, we now obtain the bounds for $h$ and $c$:
\begin{enumerate}
	\item[(i)] $x_t^i \geq 1$, where $x_t^i = 1$ occurs only if $u^i(k_t - 1) = 1$. Hence, in \eqref{eq43}, we have $h \geq 1$.
	\item[(ii)] $x_t^i \leq k_t + x_0^i, \forall i$, where the equality holds when no updates are generated by node $i$ since $k_0 = 0$. Hence, $h \leq k_t + \max_i x_0^i$.
	\item[(iii)] $g_t^i \geq g_{t^+}^i$, where we have $g_t^i = y_t^i - x_t^i \geq b$ given that $g_{t^+}^i = b$. This implies that $x_t^i \leq y_t^i - b$ and consequently, $h \leq \max_i y_t^i - b$.
\end{enumerate}
From fact (i), we easily conclude that $h_{min} = 1$. Then, from~$h \geq 1$, we get $c \geq b - a + w_t + 1$. Along with $c \geq 1$, we have $c_{min} = (b - a + w_t)^+ + 1$. Facts (ii) and (iii) reveal that $h \leq h_{max} = \min\{\max_i x_0^i + k_t, \max_i y_t^i - b\}$. From this, $c \leq h_{max} - a + b + w_t$ can be directly inferred, which accompanied by $1 \leq c \leq w_t$, results in $c_{max} = \max \{(h_{max} - a + b)^- + w_t, 1\}$. To this end, $\Pr(x_t^i = h \mid a, b, c)$ in \eqref{eq48} can be stated as:\vspace{-0.2em}
\begin{align}
	\tilde{p}_h \triangleq \Pr(x_t^i = h \mid a, b, c) &= \Pr(x_t^i = h \mid 1 \leq h \leq h_{max}) \nonumber \\
	&= \frac{\Pr(x_t^i = h, 1 \leq h \leq h_{max})}{\Pr(1 \leq h \leq h_{max})} \nonumber \\
	&= \frac{\lambda(1 - \lambda)^h}{\sum_{h=1}^{h_{max}} \lambda(1 - \lambda)^h}.
	\label{eq49}
    \vspace{-0.2em}
\end{align}
Plugging \eqref{eq49} into \eqref{eq48} completes the proof.\vspace{-0.4em}



%
%

\ifCLASSOPTIONcaptionsoff
  \newpage
\fi



\bibliographystyle{IEEEtran}
\bibliography{IEEEabrv,comref.bib}
%
%
%
%

%




\end{document}